      \newfontfamily\specialfont{Merge Light}
          \titleformat*{\section}{\Large\bfseries\sffamily\color{SlateGray}\specialfont}
    \titleformat*{\subsection}{\large\bfseries\sffamily\color{DimGray}\specialfont}
    \titleformat*{\subsubsection}{\itshape\specialfont}
    \newfontfamily\fancyheading{Lorena}
\newcommand{\shortversion}[1]{}
\newtheorem{theorem}{Theorem}
\newtheorem{lemma}{Lemma}
\newtheorem{claim}{Claim}
\newtheorem{corollary}{Corollary}
\newtheorem{definition}{Definition}
\newtheorem{proposition}{Proposition}
\newcommand{\FPT}{\textrm{\textup{FPT}}\xspace}
\newcommand{\NPC}{\textrm{\textup{NP-complete}}\xspace}
\newcommand{\OO}{{\mathcal O}}
\newcommand{\name}[1]{\textsc{#1}}
\newcommand{\YES}{\textsc{Yes}}
\newcommand{\NO}{\textsc{No}}
\newcommand{\problembox}[4]{
\begin{framed}
{\bf #1} \hfill Parameter: {#3} \\
\begin{tabular}{p{.15\textwidth} p{.8\textwidth}}
\hfill Input: & {#2}\\
\hfill Question: & {#4}\\
\end{tabular}
\end{framed}
}
\title{On the Parameterized Complexity of the Maximum Edge Coloring Problem}
\author[$\dag$]{Prachi Goyal}
\author[$\dag$]{Vikram Kamat}
\author[$\dag$]{Neeldhara Misra}
\affil[$\dag$]{Indian Institute of Science, Bangalore}
\affil[ ]{\textit {\{prachi.goyal$|$vkamat$|$neeldhara\}@csa.iisc.ernet.in}}
\date{}
\begin{document}

\maketitle

\begin{abstract}
We investigate the parameterized complexity of the following edge coloring problem motivated by the problem of channel assignment in wireless networks. For an integer $q\geq 2$ and a graph $G$, the goal is to find a coloring of the edges of $G$ with the maximum number of colors such that every vertex of the graph sees at most $q$ colors. This problem is NP-hard for $q \geq 2$, and has been well-studied from the point of view of approximation. Our main focus is the case when $q=2$, which is already theoretically intricate and practically relevant. We show fixed-parameter tractable algorithms for both the standard and the dual parameter, and for the latter problem, the result is based on a linear vertex kernel. 
 \end{abstract}

\makeatletter{}\section{Introduction}

Graph coloring problems are a broad and fundamental class of problems, involving an assignment of colors to the elements of a graph subject to certain constraints. They are often useful in modeling practical questions (map coloring, scheduling, register allocation, and pattern matching, to name a few), and have therefore been of central algorithmic interest. On the other hand, they have also been the subject of intensive structural study.

We are interested in the following edge coloring problem. For an integer $q\geq 2$ and a simple, undirected graph $G=(V,E)$, an assignment of colors to the edges of $G$ is called an edge $q$-coloring if for every vertex $v\in V$, the edges incident on $v$ are colored with at most $q$ colors. An edge $q$-coloring that uses the maximum number of colors is called a maximum edge $q$-coloring. We note that the flavor of this question is quite different from the classical edge coloring question, which is a minimization problem, and the constraints require a vertex to be incident to completely distinct colors. This problem definition is motivated by the problem of channel assignment in wireless networks (as pointed out in~\cite{AP10,FengZW09}, see also~\cite{RaniwalaGC04}). The interference between the frequency channels is understood to be a bottleneck for bandwidth in wireless networks. The goal is to minimize interference to optimize bandwidth. Some wireless LAN standards allow multiple non-overlapping frequency channels to be used simultaneously. In this scenario, a computer on the network equipped with multiple interface cards can use multiple channels. The goal is to maximize the number of channels used simultaneously, if all the nodes in the network have $q$ interface cards. It turns out that the network can be modelled by a simple, undirected graph, while the channel assignment problem corresponds to a coloring of the edges where the edges incident to any given vertex are not colored with more than $q$ colors. The maximum edge $q$-coloring is also considered in combinatorics, as a particular case of the anti-Ramsey number, see~\cite{AP10} for the details of this formulation.

The problem is already interesting for the special case when $q = 2$. It is known to be NP-complete and APX-hard~\cite{AP10} and also admits a $2$-approximation algorithm~\cite{FengZW09}. In their work on this problem, Feng et al~\cite{FengZW09} show the problem to be polynomial time for trees and complete graphs for $q = 2$, and Adamaszek and Popa~\cite{AP10} demonstrate a $5/3$-approximation algorithm for graphs which have a perfect matching. Given these developments, it is natural to pursue the parameterized complexity of the problem. Our main focus will be on the case when $q = 2$, and we note that this special case continues to be relevant in practice. 

The goal of parameterized complexity is to find ways of solving
NP-hard problems more efficiently than brute force. Here the aim is to restrict the combinatorial explosion to a parameter that is hopefully much smaller than the input size. It is a two-dimensional generalization of ``P vs. NP'' where, in addition to the overall input size
$n$, one studies how a secondary measurement (called the
\emph{parameter}), that captures additional relevant information,
affects the computational complexity of the problem in
question. Parameterized decision problems are defined by
specifying the input, the parameter, and the question to be
answered. The two-dimensional analogue of the class~P is
decidability within a time bound of $f(k)n^{c}$, where $n$ is the
total input size, $k$ is the parameter, $f$ is some computable
function and $c$ is a constant that does not depend on $k$ or
$n$. A parameterized problem that can be decided in such a
time-bound is termed \textit{fixed-parameter tractable} (\FPT). For
general background on the theory of fixed-parameter tractability,
see ~\cite{ParameterizedComplexityBook},~\cite{FlumGroheBook}, and~\cite{RN}.

A parameterized  problem is said to admit a {\em polynomial kernel}
if every instance $(I,k)$ can be reduced in polynomial time to an equivalent instance with both size and parameter value bounded by a
polynomial in $k$. The study of kernelization is a major research frontier of  parameterized complexity and many important recent
advances in the area are on kernelization. These include  general results
showing that  certain classes of parameterized problems have polynomial
kernels~\cite{AlonGKSY10,BodlaenderFLPST09,FominLST10}
or randomized kernels~\cite{KratschW12}.
The recent development of a framework for ruling out polynomial kernels under
certain complexity-theoretic
assumptions~\cite{BDF,DM10,FS}   has added a new dimension to
the field and strengthened its connections to classical complexity.  For overviews of kernelization we  refer to surveys~\cite{Bodlaender09,GN07} and to the corresponding chapters in books on parameterized complexity~\cite{FlumGroheBook,RN}.

\paragraph{Our Contributions.} We develop \FPT algorithms and kernels for the maximum edge $2$-coloring problem. The standard parameter is the solution size, or the number of colors used. On the other hand, it is known that the maximum number of colors used in an edge $2$-coloring in a graph on $n$ vertices is at most the number of vertices in the graph. This leads to a natural ``dual'' parameterization below an upper bound. Specifically, we ask if we can color the graph with at least $(n-k)$ colors, and we treat $k$ as the parameter. As an aside, we also characterize the class of graphs that can be colored with $n$ colors as being two-factors (this is implicit in several notions in the literature, and we state the proof for completeness).

Let us consider the problem with the standard parameter. A straightforward and well-understood observation~\cite{AP10,FengZW09} is that the maximum edge $2$-coloring number is at least the size of the maximum matching of the graph. Therefore, if a graph $G$ has a matching of size at least $k$, then $G$ is a YES-instance. This is a simple polynomial time preprocessing step, and therefore we may assume throughout that the size of the maximum matching in the input graph is bounded by $k$. Consequently, the vertex cover of the input is bounded by $2k$ and the treewidth is bounded by $(2k+1)$. We do not consider these natural structural parameterizations separately, since they are implicitly bounded in terms of the solution size.

We note that the expressibility of the maximum edge $2$-colorability question in MSO$_2$ is easily verified. Therefore, we may easily classify the the problem as being \FPT (parameterized by the solution size), by an application of Courcelle's theorem~\cite{Courcelle90}. However, the running time of the algorithm obtained from this meta theorem is impractical, and therefore, we explore the possibility of better algorithms specific to the problem. We first show an exponential kernel obtained by the application of some simple reduction rules, which also implies that the problem is \FPT. We then present a concrete \FPT algorithm that runs in time $\OO^*(k^k)$\footnote{The $\OO^*$ notation is used to suppress polynomial factors in the running time (c.f. Section~\ref{sec:Preliminaries}).} for the problem.
Also, for the dual parameterization, we obtain a linear vertex kernel, with $\OO(k)$ vertices and $\OO(k^2)$ edges. This implies a \FPT algorithm with running time $\OO^*(k^{k^2})$.

This paper is organized as follows. In Section~\ref{sec:Preliminaries} we provide some basic definitions and facts. In section~\ref{sec:mainFPT}, we consider the standard parameter and present an exponential kernel and a \FPT algorithm. In section~\ref{sec:dualparam}, we consider the dual parameter and show a linear vertex kernel. Finally, in section~\ref{sec:kernels} we demonstrate that the problem continues to remain NP-complete on $C_4$-free graphs, and also give a quadratic vertex kernel for this graph class.

\makeatletter{}\section{Preliminaries}\label{sec:Preliminaries}

In this section we state some basic definitions related to
parameterized complexity and graph theory, and give an overview of
the notation used in this paper. To describe running times of
algorithms we sometimes use the $O^{*}$ notation. Given $f:
\mathbb{N} \rightarrow \mathbb{N}$, we define $O^{*}(f(n))$ to be
$O(f(n) \cdot p(n))$, where $p(\cdot)$ is some polynomial
function. That is, the $O^{*}$ notation suppresses polynomial
factors in the running-time expression. The set of natural numbers (that is, nonnegative integers) is denoted by $\mathbb{N}$. For a natural number $n$ let $[n] := \{1, \ldots, n\}$. By $\log n$ we mean $\lceil\log n\rceil$ if an integer is expected.

\paragraph{Graphs.}
In the following, let $G=\left(V,E\right)$ and $G'=\left(V',E'\right)$
be graphs, and $U \subseteq V$ some subset of vertices of $G$. The
union of graphs $G$ and $G'$ is defined as $G\cup G'=\left(V\cup
  V',E\cup E'\right)$, and their intersection is defined as $G\cap
G'=\left(V\cap V',E\cap E'\right)$. A set $U$ is said to be a
\emph{vertex cover} of $G$ if every edge in $G$ is incident to at
least one vertex in $U$. $U$ is said to be an \emph{independent set}
in $G$ if no two elements of $U$ are adjacent to each other. The
\emph{independence number} of $G$ is the number of vertices in a
largest independent set in $G$. $U$ is said to be a \emph{clique} in
$G$ if every pair of elements of $U$ is adjacent to each other. A set
$U$ is said to be a \emph{dominating set} in $G$ if every vertex in
$V\setminus U$ is adjacent to some vertex in $U$. A {\em two-factor} is a graph where every vertex has degree exactly two. We refer the reader to \cite{graphbook} for details on standard graph theoretic notation and terminology we use in the paper.

\paragraph{Parameterized Complexity.} A parameterized problem $\Pi$ is a subset of $\Gamma^{*}\times
\mathbb{N}$, where $\Gamma$ is a finite alphabet. An instance of a
parameterized problem is a tuple $(x,k)$, where $k$ is called the
parameter. A central notion in parameterized complexity is {\em
  fixed-parameter tractability (FPT)} which means, for a given
instance $(x,k)$, decidability in time $f(k)\cdot p(|x|)$, where
$f$ is an arbitrary function of $k$ and $p$ is a polynomial in the
input size. The notion of {\em kernelization} is formally defined
as follows.

\begin{definition}{\rm [\bf
    Kernelization]}~\cite{RN,FlumGroheBook} A kernelization algorithm for a parameterized problem
  $\Pi\subseteq \Gamma^{*}\times \mathbb{N}$ is an algorithm that,
  given $(x,k)\in \Gamma^{*}\times \mathbb{N} $, outputs, in time
  polynomial in $|x|+k$, a pair $(x',k')\in \Gamma^{*}\times
  \mathbb{N}$ such that (a) $(x,k)\in \Pi$ if and only if
  $(x',k')\in \Pi$ and (b) $|x'|,k'\leq g(k)$, where $g$ is some
  computable function.  The output instance $x'$ is called the
  kernel, and the function $g$ is referred to as the size of the
  kernel. If $g(k)=k^{O(1)}$ (resp. $g(k)=O(k)$) then we say that
  $\Pi$ admits a polynomial (resp. linear) kernel.
\end{definition}

\paragraph{The Maximum Edge Coloring Problem.} Let $G = (V,E)$ be a graph, and let $c$ be an assignment of $k$ colors to the edges of $G$, that is, let $c$ be a surjective function from $E$ to $[k]$. We say that $c$ is an edge coloring of the graph using $k$ colors. For a subset $F$ of the edge set $E$, let $c(F)$ denote the set of colors assigned to the edge set $F$, that is,

$$c(F) = \bigcup_{e \in F} c(e).$$

We say that $c$ is {\em $q$-valid} if every vertex in the graph is incident to edges colored with at most $q$ distinct colors. Formally, if $F_v$ denotes the set of edges incident on a vertex $v$, then an edge coloring $c$ is $q$-valid if $|c(F_v)| \leq q$ for all $v \in V$. We denote by $\sigma_q(G)$ the largest integer $k$ for which there exists a $q$-valid edge coloring function with $k$ colors. When considering the special case $q = 2$, we drop the subscript, and simply use $\sigma(G)$ to refer to the maximum number of colors with which $G$ admits a $2$-valid edge coloring. The first algorithmic question that arises is the following:

\problembox{Max Edge $2$-Coloring}{A graph $G$ and an integer $k$}{$k$}{Is $\sigma(G) \geq k$, that is, is there a $2$-valid edge coloring of $G$ with at least $k$ colors?}

We first note that the \name{Max Edge $2$-Coloring} problem is equivalent to its exact version:

\begin{proposition}
For a graph $G$, $\sigma(G) \geq k$ if and only if there is a $2$-valid edge coloring of $G$ with exactly $k$ colors.
\end{proposition}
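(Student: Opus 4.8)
The plan is to prove the two implications separately, with the nontrivial content lying entirely in a color-merging argument.

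The backward direction is immediate from the definition: if $c$ is a $2$-valid edge coloring of $G$ with exactly $k$ colors, then $c$ witnesses that $\sigma(G) \geq k$.

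For the forward direction, suppose $\sigma(G) \geq k$, and let $c : E \to [k']$ be a $2$-valid edge coloring with $k' := \sigma(G) \geq k$ colors. I would prove by induction on $k' - k$ that $G$ admits a $2$-valid edge coloring with exactly $k$ colors. The base case $k' = k$ is given. For the inductive step, assume $k' > k$, so in particular $k' \geq 2$. Pick two distinct colors, say $k'-1$ and $k'$, and define a new coloring $c'$ that agrees with $c$ except that every edge colored $k'$ under $c$ is now colored $k'-1$; then $c'$ uses exactly the color set $[k'-1]$ and is surjective onto it. The key observation is that for every vertex $v$ we have $c'(F_v) \subseteq (c(F_v) \setminus \{k'\}) \cup \{k'-1\}$, so $|c'(F_v)| \leq |c(F_v)| \leq 2$; hence $c'$ is again $2$-valid. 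Applying the induction hypothesis to $c'$ (which has $k'-1$ colors, and $(k'-1) - k < k' - k$) yields a $2$-valid edge coloring of $G$ with exactly $k$ colors, completing the induction and the proof.

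There is essentially no real obstacle here; the only point requiring a moment's care is to check that merging two color classes never increases the number of colors incident to any vertex, which is what preserves $2$-validity under the operation. Note also that $\sigma(G) \geq k$ forces $|E| \geq k$, so there is no degenerate issue with surjectivity when we reduce down to $k$ colors.
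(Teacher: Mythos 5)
Your proof is correct and rests on the same key idea as the paper's: merging surplus color classes cannot enlarge any vertex's palette, so $2$-validity is preserved while the color count drops. The only difference is cosmetic --- the paper collapses all colors above $k$ onto color $k$ in a single recoloring step, whereas you merge one color at a time and wrap it in an induction on $k'-k$.
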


\begin{proof}
Suppose $\sigma(G) \geq k$, and let $c$ be a $2$-valid edge coloring of $G$ using $(k+r)$ colors, $r \geq 0$. Consider the set of edges $e$ for which $c(e) > k$, that is, consider:

$$F := c^{-1}(k+1) \cup c^{-1}(k+2) \cup \cdots \cup c^{-1}(k+r).$$

Let $c^*$ be the coloring that ``re-colors'' the edges in $F$ with the color $k$, that is:
\begin{equation*}
c^*(e) = \left\{
\begin{array}{cl}
k & \text{if } e \in F\\
c(e) & \text{otherwise.}
\end{array} \right.
\end{equation*}
It is easy to see that $c^*$ is a $2$-valid edge coloring that uses $k$ colors. The other direction of the statement follows directly from the definition of $\sigma(G)$.
\end{proof}

Therefore, when parameterizing by the standard parameter, we will address the question of whether there is a $2$-valid edge coloring that uses exactly $k$ colors, and we refer to this as the \name{Exact Edge $2$-Coloring} problem. We now introduce the dual parameterization. We will need some terminology first. Let $G$ be a graph and let $c: E \rightarrow [k]$ be a $2$-valid edge coloring of $G$ with $k$ colors. For $1 \leq i \leq k$, let $F_i$ denote the set of edges $e$ for which $c(e) = i$, that is, $F_i = c^{-1}(i)$. Notice that each $F_i$ is non-empty. Fix an arbitrary edge $e_i \in F_i$, and let $H$ be the subgraph induced by $\{e_1, \ldots, e_k\}$. We call $H$ the {\em character subgraph} of $G$. Notice that $\Delta(H) \leq 2$. It is also easy to argue that $\sigma(G) \leq |V|$ by examining the character subgraph and using the fact that it has at most $|V|$ edges (see~\cite{FengZW09}). Therefore, we may ask the following question:

\problembox{$(n-k)$-Edge $2$-Coloring}{A graph $G$ and an integer $k$}{$k$}{Is $\sigma(G) \geq (n-k)$, that is, is there a $2$-valid edge coloring of $G$ with at least $(n-k)$ colors?}

An useful notion is that of a {\em palette assignment} associated with an edge coloring $c$. Recall that for a vertex $v$, we use $F_v$ to denote the set of edges incident on $v$. If $c: E \rightarrow [k]$ is an edge coloring, then the palette assignment associated with $c$ is the function $c^{\dag}$ defined as: $c^{\dag}(v) = c(F_v)$. Note that in general, $c^{\dag}$ is a function from $V$ to $2^{[k]}$, however, if $c$ is a $2$-valid coloring, then $c^{\dag}: V \rightarrow {[k] \choose 2} \cup [k] \cup \{\emptyset\}$. We conclude this introduction to the maximum edge coloring problem with a straightforward characterization of graphs for which $\sigma(G) = |V|$.

\begin{proposition}
\label{prop:characterization}
A graph $G=(V,E)$ is a two factor if, and only if, $\sigma(G) = |V|$.
\end{proposition}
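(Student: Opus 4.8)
The plan is to establish the two implications separately; the reverse direction is where the work lies.

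For the easy direction, suppose $G$ is a two-factor. Then every vertex has degree exactly two, so $|E| = |V|$ and $G$ is a disjoint union of cycles. I would colour each edge with its own colour, using $|E| = |V|$ colours; the map is surjective, and every vertex is incident to exactly two edges and hence to exactly two colours, so the colouring is $2$-valid. Together with the bound $\sigma(G) \le |V|$ already recorded (obtained from the character subgraph), this yields $\sigma(G) = |V|$.

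For the reverse direction, write $n = |V|$ and suppose $\sigma(G) = n$. Fix a $2$-valid colouring $c$ using $n$ colours and form its character subgraph $H$ on representative edges $e_1, \dots, e_n$. Here $|E(H)| = n$, $\Delta(H) \le 2$, and $V(H) \subseteq V$ with $|V| = n$, so $2n = \sum_v \deg_H(v) \le 2|V(H)| \le 2n$; equality throughout shows that $H$ spans $V$ and is $2$-regular, i.e.\ $H$ is itself a two-factor. In particular each vertex $v$ meets two edges $e_i, e_j$ of $H$ with $i \ne j$, bearing the distinct colours $i$ and $j$; since $c$ is $2$-valid, its palette satisfies $c^{\dag}(v) = \{i,j\}$, of size exactly two.

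The crux is to rule out any edge of $G$ outside $H$. I would count the incidences $\{(v,i) : i \in c^{\dag}(v)\}$: there are exactly $2n$ of them, one pair for each of the two colours in each of the $n$ palettes. On the other hand, each colour $i$ is the colour of some edge $uv$ and therefore lies in both $c^{\dag}(u)$ and $c^{\dag}(v)$, so it occurs in at least two palettes; with $n$ colours and only $2n$ incidences, each colour occurs in exactly two palettes, say those of $u_i \ne v_i$. Then every edge of colour $i$ has both endpoints in $\{u_i, v_i\}$, and since $G$ is simple there is exactly one such edge. Hence $|E(G)| = n$, so $\sum_v \deg_G(v) = 2n$; but $\deg_G(v) \ge \deg_H(v) = 2$ for every $v$, which forces $\deg_G(v) = 2$ throughout, so $G$ is a two-factor.

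I expect the only delicate points to be (i) verifying that a degree-two vertex of $H$ really sees two colours --- this needs the observation that distinct representative edges carry distinct colours, which then pins every palette to size exactly two and makes the incidence count tight --- and (ii) the passage from ``every colour lies in exactly two palettes'' to ``every colour class is a single edge,'' which uses simplicity of $G$. Everything else is routine degree-sum bookkeeping.
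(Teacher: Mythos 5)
Your proof is correct, and its skeleton matches the paper's: both directions start the same way, and in the reverse direction both arguments extract the character subgraph $H$, show by a degree count that it is a spanning two-factor, and note that each vertex's palette is exactly the two (distinct) colours of its two representative edges. Where you diverge is in how you forbid edges of $G$ outside $H$. The paper argues locally: since each colour has a unique representative edge, two vertices that are non-adjacent in $H$ have disjoint palettes, so a hypothetical edge of $G \setminus H$ would join two vertices with disjoint two-element palettes and could not carry any colour consistent with $2$-validity; hence $G = H$ outright. You instead argue globally, double-counting the $2n$ vertex--colour incidences to conclude that every colour lies in exactly two palettes, hence (by simplicity) every colour class is a single edge, so $|E(G)| = n$ and the degree-sum forces $2$-regularity. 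Both finishes rest on the same facts (palettes of size exactly two, one representative edge per colour); the paper's is shorter and immediately identifies $G$ with $H$, while your counting route buys the slightly stronger intermediate statement that every colour class of $c$ is a single edge. The two delicate points you flag are exactly the right ones, and you handle both.
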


\begin{proof}
Let $n := |V|$.

If $G$ is a two-factor, then it has exactly $n$ edges, say $\{e_1, \ldots, e_n\}$. Consider the coloring $c: E \rightarrow [n]$ given by $c(e_i) = i$. Clearly, $c$ is surjective, and since every vertex in $G$ has degree two, any coloring is a $2$-valid coloring. In particular, this shows that $\sigma(G) = n$.

On the other hand, let $\sigma(G) = n$, let $c: E \rightarrow [n]$ be a valid $2$-coloring that uses $n$ colors, and let $H$ be a character subgraph of $G$. Since $|E(H)| \leq n$ and $\sigma(G) = n$, it follows that $|E(H)| = n$. Further, since $\Delta(H) \leq 2$, we also have that $|V(H)| = n$. It is also easy to observe that $H$ is a two-factor. Indeed, let $p$ and $q$ be the number of vertices in $H$ that have degree one and two in $H$, respectively. We know that $(p+q) = n$ (note that $H$ has no isolated vertices) and counting the edges, we get $\frac{p+2q}{2} = n$. This gives us $p = 0$.  Thus $H$ is a spanning two-factor in $G$. Observe that the palettes of vertices that are non-adjacent in $H$ are disjoint, that is:

\[ c^{\dag}(u) \cap c^{\dag}(v) = \emptyset \text{ if } (u,v) \notin E(H).\]

Therefore, we have that $G = H$, since any edge $e$ that is not in $H$ is incident on vertices whose palettes with respect to $c$ are disjoint, implying that $c$ cannot be extended to a valid $2$-coloring of $H \cup \{e\}$.
\end{proof}  

\makeatletter{}\section{A \FPT Algorithm for Max Edge $2$-Coloring}
\label{sec:mainFPT}

We begin by describing an exponential kernel for the \name{Exact Edge $2$-Coloring} problem. We will subsequently describe a detailed \FPT algorithm. We first observe that if $G$ has a matching of $k$ edges, then it is already a \YES-instance of the problem. 

\begin{proposition}
\label{Claim:maxmatching}
Let $(G,k)$ be an instance of \name{Exact Edge $2$-Coloring}, and let $m$ denote the number of edges in $G$. If $m < k$, then $G$ is a \NO-instance. If the size of the maximum matching in $G$ is at least $(k-1)$ and $m \geq k$, then $G$ is a \YES-instance. 
\end{proposition}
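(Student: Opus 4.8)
The proposition has two parts. The first part — if $m < k$ then $G$ is a \NO-instance — is immediate: a $2$-valid edge coloring with exactly $k$ colors must be surjective onto $[k]$, so at least $k$ edges are needed; with fewer than $k$ edges no such coloring exists. I would dispatch this in one sentence.

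The substantive claim is the second part: if $G$ has a matching $M = \{u_1v_1, \ldots, u_{k-1}v_{k-1}\}$ of size $k-1$ and $m \geq k$, then $\sigma(G) \geq k$, equivalently (by the preceding Proposition) $G$ admits a $2$-valid edge coloring with exactly $k$ colors. The plan is the natural greedy one: give each matching edge $u_iv_i$ its own private color $i$, for $i = 1, \ldots, k-1$. Since $m \geq k$ there is at least one edge $e$ not in $M$; give $e$ the color $k$. All remaining edges must now be colored using colors from $[k]$ in a way that keeps every vertex's palette of size at most $2$. The clean way to do this: process the non-matching edges and assign each one a color so as not to create a third color at either endpoint. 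A vertex $v$ that is matched (say to $u_i$ via the edge colored $i$) already "spends" one color; it can afford exactly one more. A vertex that is unmatched by $M$ is, by maximality of $M$... wait — $M$ need not be maximum, only of size $k-1$; but that's fine, I don't need maximality.

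Here is the cleaner argument I would actually write. Color the matching edges $1, \ldots, k-1$ as above. Now consider any edge $xy \notin M$. At least one endpoint is saturated by $M$ — this is NOT guaranteed in general, but I can enforce it: if both $x$ and $y$ are $M$-unsaturated then $M \cup \{xy\}$ is a larger matching, so I may assume WLOG that $M$ was chosen maximal (not maximum), which still has size $\geq k-1$; actually since we only assumed size exactly $\geq k-1$ I should first extend $M$ to a maximal matching, which only increases its size, then truncate back to size exactly $k-1$ if needed — but truncating can unsaturate vertices. Better: take $M$ maximal of size $t \geq k-1$, color its edges with $\min(t, k-1)$ colors... this is getting fiddly. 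The simplest correct route: take $M$ maximal, so $V \setminus V(M)$ is independent. Pick any $k-1$ edges of $M$ to get distinct colors $1, \ldots, k-1$ (possible since $|M| \geq k-1$); the remaining edges of $M$ will just be recolored later. Every edge of $G$ has at least one endpoint in $V(M)$. Now define the coloring: each vertex $v \in V(M)$ is incident to exactly one "designated" matching edge $e_v$ of $M$; let $\phi(v)$ be the color of $e_v$ if $e_v$ is among the first $k-1$, and otherwise $\phi(v) := k$. For every non-designated edge $xy$, color it $\phi(x)$ if $x \in V(M)$, breaking ties arbitrarily. One checks every vertex sees at most the two colors $\{\phi(v), (\text{color of its designated edge})\}$...

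**The main obstacle.** The crux — and where I'd spend the care — is verifying that color $k$ is actually used (surjectivity onto all $k$ colors), and that every vertex's palette stays within size $2$ under whatever concrete rule I fix. The first is handled by using $m \geq k$ to find an extra edge and forcing it to color $k$ (or by noting that if all edges of $M$ get colors $1,\ldots,k-1$ then some non-matching edge exists and gets a fresh color). The palette bound requires committing to a rule where each vertex has a fixed "personal" color and every non-matching edge inherits the personal color of a chosen endpoint, so that each vertex $v$ ever only sees $\phi(v)$ plus the color of one specific edge through $v$ — i.e. at most two colors. I expect the write-up to fix $M$ maximal, designate for each $v \in V(M)$ one incident edge of $M$, show the map "edge $\mapsto$ color of a designated-endpoint's edge" is $2$-valid, and then separately argue all of $1, \ldots, k-1$ appear (by construction) and $k$ appears (using $m \ge k$), possibly by a small recoloring as in the proof of the earlier Proposition.
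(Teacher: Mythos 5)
Your first part (fewer than $k$ edges implies a \NO-instance) is fine and matches the paper. The problem is in the completion of the coloring for the second part. After giving the matching edges private colors, the rule you finally commit to --- ``for every non-matching edge $xy$, color it $\phi(x)$ for a chosen endpoint $x \in V(M)$, breaking ties arbitrarily'' --- does not guarantee $2$-validity, and the sentence ``one checks every vertex sees at most the two colors $\{\phi(v), \dots\}$'' is exactly the step that fails: a vertex $b$ whose matching edge has color $2$ and which is joined by non-matching edges to two matched vertices $a$ and $c$ with personal colors $1$ and $3$ will, under an unlucky (but permitted) tie-break, see the palette $\{1,2,3\}$. Salvaging the endpoint-inheritance rule would need a nontrivial argument that a consistent choice of endpoints exists, which you neither state nor prove; the surrounding detour through maximal versus maximum matchings and saturation of endpoints is likewise not needed for this statement.

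The idea you are missing --- and the one the paper uses --- is much simpler: take a maximum matching $M = \{e_1,\dots,e_r\}$ with $r \geq k-1$, color $e_i$ with color $i$, and color \emph{every} edge outside $M$ with one single common color $0$. Since $M$ is a matching, each vertex is incident to at most one edge of $M$, so its palette is contained in $\{i, 0\}$ for some $i$, and $2$-validity is immediate; no case analysis on unmatched vertices is required, because they only ever see color $0$. The hypothesis $m \geq k$ is used only to count colors: either some edge lies outside $M$ and the coloring uses $r+1 \geq k$ colors, or $G$ itself is a matching with $m = r \geq k$ edges. Your instinct to force one extra edge to carry a fresh color is the right accounting, but the coloring of the \emph{remaining} non-matching edges should be the trivial ``all one color'' assignment rather than any endpoint-dependent rule.
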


\begin{proof}
The first part of the claim is trivial: if $m < k$, then it is not possible to color the edges of $G$ with $k$ colors, since every edge is assigned exactly one color. 

On the other hand, let $m \geq k$, and let $M = \{e_1, \ldots, e_r\}$ be the set of edges in a maximum matching of $G$. Consider the following coloring for $G$:

\begin{equation*}
c(e) = \left\{
\begin{array}{cl}
i & \text{if } e = e_i\\
0 & \text{otherwise.}
\end{array} \right.
\end{equation*}

It is easily checked that the coloring $c$ uses $(r+1)$ colors  (note that there is at least one edge that is colored $0$ by $c$) and is $2$-valid. The claim follows. 
\end{proof}

Let $(G=(V,E),k)$ be an instance of \name{Exact Edge $2$-Coloring}. The first step towards an exponential kernel is to identify a matching of maximum size, say $M$, and return a trivial \YES-instance if $|M| \geq k-1$. If this is not the case, let $S \subseteq V$ be the set of both endpoints of every edge in $M$. We use $I$ to denote $V \setminus S$. Note that $|S| \leq 2k-4$ and $I$ is an independent set. At this point, we also remark that on instances where the maximum degree is bounded by $d$, we have that the instance size is bounded by $2kd$. This is interesting because the problem is NP-complete for graphs of constant maximum degree, as can be observed by adapting the reduction in~\cite{AP10}.

For $T \subseteq S$, let $I_T \subseteq I$ denote the set of vertices $v$ in $I$ for which $N(v) = T$. Note that $\{ I_T ~|~ T \subseteq S\}$ forms a partition of $I$ into at most $2^{|S|}$ classes. We are now ready to suggest our first reduction rule.

\begin{enumerate}
\item[{\bf (R1)}] 
For $T \subseteq S$, and let $r := \max\{10,|T|+1\}$. If $|I_T| > r$, delete all but $r$ vertices from $I_T$. The reduced instance has the same parameter as the original.
\end{enumerate}

It is easy to see that this reduction rule may be applied in $O(|I|)$ time. We now prove the correctness of this rule. 

\begin{proposition}
\label{Claim:expkernel_correct} Let $(G,k)$ be an instance of \name{Exact Edge $2$-Coloring}, let $S$ be a vertex cover of $G$ and let $T \subseteq S$. Let $(H,k)$ be the instance obtained by applying {\bf (R1)} to $G$ with respect to $T$. The instances $(G,k)$ and $(H,k)$ are equivalent.
\end{proposition}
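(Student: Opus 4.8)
Write $Z$ for the set of vertices of $I_T$ deleted by \textbf{(R1)} and $W := I_T \setminus Z$ for those that survive, so that $H = G - Z$ and $|W| = r = \max\{10,|T|+1\} \ge 1$; recall that every vertex of $I_T$, surviving or deleted, has neighbourhood exactly $T$. I would prove the two implications separately. \emph{If $(H,k)$ is a yes-instance, then so is $(G,k)$:} this is the easy direction. Take a $2$-valid edge colouring $c$ of $H$ with at least $k$ colours and fix any $w \in W$. For every deleted vertex $z \in Z$, colour the edge $zu$ with $c(wu)$ for each $u \in T$. Then $z$ sees exactly the colours of $c^{\dag}(w)$, hence at most two; and every $u \in T$ only gains edges whose colour it already sees, so no palette grows. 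This yields a $2$-valid colouring of $G$ with the same set of colours, so $\sigma(G) \ge k$.

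\emph{If $(G,k)$ is a yes-instance, then so is $(H,k)$:} using the equivalence between the exact and ``at least'' versions established above, fix a $2$-valid edge colouring $c$ of $G$ with exactly $k$ colours. Its restriction to $H$ is again $2$-valid; the only danger is that some colours vanish. Call a colour \emph{lost} if every edge of that colour in $G$ is incident to a deleted vertex, and let $L$ be the set of lost colours. It suffices to modify $c$ --- keeping it a $2$-valid $k$-colouring of $G$ --- so that $L$ becomes empty, since then its restriction to $H$ witnesses $\sigma(H) \ge k$. The key structural observation is that $|L| \le |T|$. Indeed, for a lost colour $i$ let $T_i \subseteq T$ be the set of $T$-endpoints of the $i$-coloured edges (all of which run from $Z$ to $T$), so $T_i \neq \emptyset$. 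If $u \in T_i$, then $u$ sees $i$ in $G$, but $H$ contains no edge of colour $i$ at all, while $u$ still has an incident edge in $H$ (to any vertex of $W$); hence $|c^{\dag}_H(u)| = 1$. Therefore no $u$ can lie in $T_i \cap T_j$ for distinct lost colours $i,j$: it would see $i$, $j$ and the single colour it retains in $H$, i.e.\ three colours in $G$. So the sets $\{T_i : i \in L\}$ are pairwise disjoint nonempty subsets of $T$, and $|L| \le |T| < r = |W|$.

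For the recovery, choose for each lost colour $i$ an anchor $u_i \in T_i$; the anchors are distinct and each has a spare palette slot (in $H$ it sees exactly one colour). Since we kept $r \ge |T|+1 > |L|$ vertices of $I_T$, we may also assign each lost colour a private surviving vertex $w_i \in W$, with vertices of $W$ still unused. The plan is, for each $i$, to re-route colour $i$ onto an edge between $u_i$ and $w_i$ and then repair the colouring of the remaining edges at $w_i$ --- all of which go into $T$ --- so that $w_i$ keeps at most two colours and no vertex of $T$ is forced to a third colour: the spare slots at the anchors absorb the reinstated lost colours, and edges at already-saturated vertices of $T$ are left untouched whenever possible. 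Performing this for all lost colours simultaneously produces the desired $k$-colouring of $G$ with $L = \emptyset$.

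The main obstacle is precisely this last step. Because each surviving vertex of $I_T$ is adjacent to \emph{all} of $T$, any recolouring performed at a survivor propagates to every vertex of $T$, so one must show that a globally consistent choice of replacement colours exists; this is where the surplus of survivors is genuinely used, and where the explicit constant $10$ supplies the slack needed to dispose of the small and degenerate configurations (small $|T|$, survivors of degree below two, survivors whose palette is already a single colour). I expect the careful bookkeeping in this recolouring argument to be the bulk of the formal proof; the two reductions above and the bound $|L|\le|T|$ are routine.
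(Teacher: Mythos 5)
Your easy direction ($H$ yes $\Rightarrow$ $G$ yes) is exactly the paper's argument (mimic a surviving twin), and your bound $|L|\le|T|$ via the disjointness of the sets $T_i$ is correct. The gap is the step you yourself flag: the ``re-route and repair'' of the lost colours is not an argument, and the natural implementation of it fails. Once you put colour $i$ on the edge $u_iw_i$, the vertex $w_i$ may keep at most one further colour, so \emph{all} of its remaining edges --- which go to every vertex of $T\setminus\{u_i\}$ --- must be recoloured with a single colour $b$ that lies in the (final) palette of every such neighbour; such a common colour need not exist (e.g.\ when $c^{\dag}(w_i)=\{b_1,b_2\}$ and some neighbours in $T$ are saturated with palettes containing $b_1$ but not $b_2$ and others vice versa). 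Worse, recolouring $w_i$'s edges can erase the last occurrence of some other colour, creating new lost colours, so the procedure is not even obviously terminating, and the count $|L|\le|T|<r$ by itself does not rescue it. So the proposal does not constitute a proof of the hard direction.

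The paper closes exactly this hole with a different key lemma, which is the idea missing from your sketch: instead of rerouting individual colours, it bounds the number of \emph{distinct palettes} ${\mathcal P}_T=\{c^{\dag}(v):v\in I_T\}$ by $\max\{10,|T|+1\}=r$, via a case analysis (if ${\mathcal P}_T$ contains two disjoint pairs $\{p,q\},\{p',q'\}$ then every palette is a subset of $\{p,q,p',q'\}$ of size at most two, giving at most $10$; otherwise ${\mathcal P}_T$ is an intersecting family of $\le 2$-sets over a universe of size $\le |T|+2$, giving at most $\max\{|T|+1,3\}$). Since $r$ vertices of $I_T$ survive, one can recolour each survivor to mimic, edge by edge, a representative of each palette class; because survivors and representatives have identical neighbourhoods $T$, this is automatically $2$-valid at every vertex of $T$ (no vertex sees a colour it did not already see) and at the survivors, and every colour used by $c$ --- in particular every colour carried only by deleted vertices --- is realized in $H$. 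This wholesale-mimicking argument sidesteps the consistency and cascading problems that block your per-colour repair; if you want to salvage your route, you would in effect have to reprove this palette bound anyway, at which point mimicking whole palettes is both simpler and complete.
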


\begin{proof}
Let $(G,k)$ be a \YES-instance of \name{Exact Edge $2$-Coloring}, and let $c$ be a $2$-valid edge coloring of $G$ that uses $k$ colors. Notice that since $H$ is a subgraph of $G$, we may restrict $c$ to $H$ to obtain a $2$-valid coloring of $H$. However, it is not clear if such a restriction would use all the $k$ colors. It turns out that this can be ensured by a carefully planned restriction.\footnote{In what follows, we use the same notation for vertex subsets that are common to $G$ and $H$, while ensuring that the graph in question is clear from the context.}

Consider $T \subseteq S$. If $|I_T| \leq  \max\{10,|T|+1\}$, then there is nothing to prove. Therefore, suppose $|I_T| > \max\{10,|T|+1\}$, and consider the palettes of vertices of $I_T$ with respect to $c$, namely, ${\mathcal P}_T := \{ c^\dag(v) ~|~ v \in I_T \}$. Note that, for $v \in I_T$, if $c^\dag(v) = \{p,q\}$, then for every $u \in T$, $c^\dag(u) \cap \{p,q\} \neq \emptyset$.  We now have a case analysis on ${\mathcal P}_T$.

\paragraph{Case 1.} ${\mathcal P}_T$ contains at least one pair of mutually disjoint sets. Note that all sets in ${\mathcal P}_T$ have size at most two. We analyze the situation assuming that both sets have two elements, the same argument (and therefore, the same conclusion) holds for the situations when one or both sets have size one. 

Let $\{p,q\} \in {\mathcal P}_T$ and $\{p',q'\} \in {\mathcal P}_T$ such that $\{p,q\} \cap \{p',q'\} = \emptyset$. Then for every $u \in T$, we have: $$c^\dag(u) \cap \{p,q\} \neq \emptyset \text{ and } c^\dag(u) \cap \{p',q'\} \neq \emptyset.$$ 

We claim that in this situation, for any $v \in I_T$, $c^{\dag}(v) \subseteq \{p,p',q,q'\}$. Indeed, suppose not, and let $i \in c^{\dag}(v)$ and $i \notin \{p,p',q,q'\}$. But this would imply that $i \in c^\dag(u)$ for some $u \in T$. However, $c^\dag(u)$ already contains one of $p$ or $q$ {\em and} one of $p'$ or $q'$, and $|c^\dag(u)|$ is at most two, implying that there is no room for the additional color $i$ --- a contradiction. 

Therefore, ${\mathcal P}_T \subset 2^Q$ where $Q := \{p,p',q,q'\}$. Noting that ${\mathcal P}_T$ contains sets of size at most two, we have that $|{\mathcal P}_T| \leq 10$.

\paragraph{Case 2.} ${\mathcal P}_T$ contains no mutually disjoint sets, in other words, ${\mathcal P}_T$ is a pairwise intersecting family. Notice that ${\mathcal P}_T$ is a family of sets of size at most two over the universe: ${\mathcal C}_T := \bigcup_{u \in T}c^\dag(u)$. 

We first claim that $|{\mathcal C}_T| \leq |T|+2$. To see this, consider the palettes of vertices of $T$ with respect to $c$, namely, ${\mathcal Q}_T := \{ c^\dag(u) ~|~ u \in T \}$. Suppose the maximum number of mutually disjoint sets in ${\mathcal Q}_T$ is two. Then $|{\mathcal C}_T| \leq 4 + (|T| - 2) = |T|+2$. On the other hand, note that ${\mathcal Q}_T$ does not have more than two mutually disjoint sets --- indeed, having three mutually disjoint sets in ${\mathcal Q}_T$ would mean that $c$ cannot be extended to a $2$-valid coloring of any vertex in $I_T$. Therefore, we have that $|{\mathcal C}_T| \leq |T|+2$.

Now we return to ${\mathcal P}_T$, which is an intersecting family of sets of size at most two over an universe of size at most $|T|+2$. It follows that $|{\mathcal P}_T| \leq \max\{|T|+1,3\}$.

\paragraph{}Overall, we conclude that $|{\mathcal P}_T| \leq \max\{10, |T|+1\}$.

In $I_T$, we form a maximal sub-collection $I_T'$ such that all palettes are distinct. Note that $|I_T'| = |{\mathcal P}_T|$. Let $t := |I_T'|$. Now, let $J_T$ be the vertices in $H$ whose neighborhood in $S$ is $T$. Color the first $t$ vertices of $J_T$ according to $I_T'$ and the remaining vertices arbitrarily. Note that this coloring is always possible since $|J_T| = \max\{10, |T|+1\}$, and the first $t$ vertices are always available. It is routine to check that the proposed coloring on $H$ uses every color used by $c$ in $G$. 

In the reverse direction, let $c$ be a $2$-valid edge coloring of $H$ that uses $k$ colors. Let $v$ be a vertex in $G$ that was not affected by {\bf (R1)}. For all edges incident on $v$, we simply mimic the coloring $c$. Let $v$ be a vertex that was deleted according to {\bf (R1)}. Since $v \in I_T$ for some $T$, there is at least one vertex $w \in I_T$ that is not affected by {\bf (R1)}. Color the edges incident on $v$  according to the coloring of $c$ on the edges incident to $w$ (note that because $w$ and $v$ have exactly the same neighborhood in $G$, this extension is $2$-valid). Note that $H$ appears as a subgraph of $G$, and the coloring on $G$ restricted to this subgraph is identical to the coloring of $H$. It follows that all colors used by $c$ in $H$ are used in the coloring that we have proposed for $G$. This concludes the proof. 

\end{proof}

\begin{lemma}
\name{Exact Edge $2$-Coloring} has a kernel on $O(4^k \cdot (2k-4))$ vertices.
\end{lemma}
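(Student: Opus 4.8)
The plan is to obtain the kernel by chaining together the ingredients already developed above. Given an instance $(G,k)$ of \name{Exact Edge $2$-Coloring}, I would first run a polynomial-time maximum matching algorithm to compute a maximum matching $M$. If the number of edges $m$ satisfies $m < k$, output a trivial \NO-instance; if $m \ge k$ and $|M| \ge k-1$, output a trivial \YES-instance. Both outputs are correct by Proposition~\ref{Claim:maxmatching}. In the one remaining case we have $|M| \le k-2$, so the set $S$ of the $2|M| \le 2k-4$ endpoints of $M$ is a vertex cover of $G$ (equivalently, $I := V \setminus S$ is independent, since otherwise $M$ could be enlarged).

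Next I would compute, for each $T \subseteq S$, the class $I_T := \{ v \in I : N(v) = T\}$. These classes partition $I$, and since $|S| \le 2k-4$ there are at most $2^{|S|} \le 2^{2k-4} = 4^{k-2}$ of them. I would then apply reduction rule {\bf (R1)} once to every class $I_T$. Each application runs in linear time, preserves the parameter, and produces an equivalent instance by Proposition~\ref{Claim:expkernel_correct}; moreover, since {\bf (R1)} only ever deletes vertices lying in $I$, the set $S$ remains a vertex cover and every not-yet-processed class $I_{T'}$ is untouched, so the rules may be applied one after another. After all applications, $|I_T| \le \max\{10, |T|+1\} \le \max\{10, 2k-3\}$ for every $T \subseteq S$.

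Finally I would count: the reduced graph has $|V| = |S| + |I| \le (2k-4) + 4^{k-2}\cdot \max\{10, 2k-3\}$, which is $O(4^k \cdot (2k-4))$ vertices, and the whole procedure runs in polynomial time; if an edge bound is also desired, note that every vertex of $I$ has degree at most $|S| \le 2k-4$, so the reduced graph has $O(4^k \cdot (2k-4)^2)$ edges. The argument has no real obstacle — all the substance is in Propositions~\ref{Claim:maxmatching} and~\ref{Claim:expkernel_correct}. The only things to be careful about are that $S$ is genuinely a vertex cover (so that {\bf (R1)} and its correctness proof apply verbatim), that the successive applications of {\bf (R1)} do not interfere (they do not, since only $I$-vertices are removed), and the final bookkeeping converting the per-class bound and the number of classes into the stated vertex count. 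Degenerate small values of $k$ (where $2k-4 \le 0$) are already disposed of by the $m < k$ test.
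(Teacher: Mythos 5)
Your proposal is correct and follows essentially the same route as the paper: reduce to the case $|S|\le 2k-4$ via Proposition~\ref{Claim:maxmatching}, apply {\bf (R1)} to each of the at most $2^{|S|}$ classes $I_T$ (justified by Proposition~\ref{Claim:expkernel_correct}), and count $|S|+2^{|S|}\cdot\max\{10,|T|+1\}=O(4^k\cdot k)$ vertices. The extra checks you note (that $S$ stays a vertex cover and that successive applications of {\bf (R1)} do not interfere) are sound and only make explicit what the paper leaves implicit.
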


\begin{proof}
Notice that once reduced with respect to {\bf (R1)}, for every $T \subseteq S$, there are at most $\max\{10,|T|+1\}$ vertices in $G$. Thus, a conservative upper bound on the number of vertices in a reduced instance would be $(|S| + 2^{|S|}|S|)$, and the lemma follows from the fact that $|S| \leq 2k-4$. 
\end{proof}

We now turn to a \FPT algorithm for \name{Exact Edge $2$-Coloring}. See Algorithm~\ref{func:part1} for a pseudocode-based description of the overall algorithm. Recall that the goal is to compute a $2$-valid edge coloring that uses $k$ colors. We begin by using Proposition~\ref{Claim:maxmatching} to accept instances with a maximum matching on at least $k-1$ edges, and reject instances that have fewer than $k$ edges. Otherwise, let $S$ be the vertex cover obtained by choosing both endpoints of a maximum matching. 

The algorithm begins by guessing a palette assignment $\tau$ to the vertices in $S$. First, some simple sanity checks are implemented. Note that if $c$ is a $2$-valid edge coloring of $G$ that uses $k$ colors, and $S$ is a vertex cover of $G$, then $\bigcup_{v\in S} c^\dag(v) = [k]$ (if not, the missing color cannot be attributed to any edge). Therefore, we ensure that $\bigcup_{v\in S} \tau(v) = [k]$. Also, for an edge in $S$, the palettes assigned to the endpoints clearly cannot be disjoint. Therefore, for $u,v \in S$, if $(u,v) \in E$, we ensure that $\tau(u) \cap \tau(v) \neq \emptyset$.

Let $G$ be a \YES-instance of \name{Exact Edge $2$-Coloring}, and suppose $c$ is a $2$-valid edge coloring of $G$ that uses $k$ colors. Let $X_c \subseteq [k]$ be the set of colors used by $c$ on $S$. More formally, $X_c := \bigcup_{e \in G[S]} c(e)$. The second step of the algorithm involves guessing this subset of colors, that is, we consider all possible subsets of $[k]$ as candidates for being the exact set of colors that are realized by some $2$-valid coloring when restricted to $G[S]$.

Let $X \subseteq [k]$ be the colors that are to be realized in $S$. All the colors in $X$ are initially labelled \textit{unused}. Note that for $u,v \in S$, if $(u,v) \in E$, $p_{uv} := \tau(u) \cap \tau(v)$ either has one or two colors. If the intersection has one color, say $i$, and $i \notin X$, then we reject the guess $X$. On the other hand, if $i \in X$, we assign $i$ to the edge between $u$ and $v$ and update the label for $i$ as \textit{used}. Notice that this is a ``forced'' assignment, since this is the only way to extend $c$ to the edge $uv$ while respecting $\tau$. On the other hand, suppose $p_{uv}$ has two colors. If neither of these colors is in $X$, then we may reject this guess. If it has two colors and only one of them is in $X$, then we assign the color in $X$ to $(u,v)$ and update its label as used. Otherwise, we branch on the two possibilities of $c(u,v)$, which come from $p_{uv}$. Note that the count of colors labelled unused in $|X|$ drops by exactly one in both branches, so this is a two-way branching, where the corresponding search tree has depth bounded by $|X|$. This completes the description of the functionality of {\bf CheckTop} (see also Function~\ref{func:part1}).

\begin{function}[h]
\label{func:part1}
\SetAlgoLined
\DontPrintSemicolon
\KwIn{$G,k,S,\tau,c,X$, an instance of \name{Exact Edge $2$-Coloring}, with a partial edge coloring $c$, a vertex cover $S$, palette assignment $\tau$ to vertices in $S$, and a subset of colors $X$.}
\KwOut{\YES~if $G$ has a $2$-valid edge coloring that uses exactly the colors in $X$ in $G[S]$,  while extending $c$ and respecting $\tau$, \NO~otherwise.}
\uIf{$X = \emptyset$}{\KwRet{\YES}}
	    \For{$(u,v) \in G[S]$ and $(u,v) \in E$}{
	        \uIf{$\tau(u) \cap \tau(v) = \{i\}$}{
	            \uIf{$i \notin X$}{\KwRet{\NO}}
	            $c(u,v) := i$\;
	            $X \leftarrow X \setminus \{i\}$\;
	            Continue\;
	        }
    	}
    	\For{$(u,v) \in G[S]$ and $(u,v) \in E$}{
        	\uIf{$\tau(u) = \tau(v) = \{i,j\}$ and $|\{i,j\} \cap X| = \emptyset$}{\KwRet{\NO}}
	        \ElseIf{$\tau(u) = \tau(v) = \{i,j\}$ and $|\{i,j\} \cap X| = 1$}{
	            $\ell := \{i,j\} \cap X$\;
	            $c(u,v) := \ell$\;
	            $X \leftarrow X \setminus \ell$\;
	            Continue\;
	        }
	        \Else{
	            $c_i(u,v) := i$\;
	            $X_i \leftarrow X \setminus i$\;
	            $a = $ CheckTop($G,k,S,\tau,c_i,X_i$);\;
	            $c_j(u,v) := j$\;
	            $X_j \leftarrow X \setminus j$\;
	            $b = $ CheckTop($G,k,S,\tau,c_j,X_j$);\;
                \KwRet{a {\bf or} b}
	        }
    	}
    	
\If{$X \neq \emptyset$}{\KwRet{\NO}}

\caption{CheckTop($G,k,S,\tau,c,X$)}
\end{function}

Finally, we need to realize the colors in $[k]\setminus X$ on the edges that have one endpoint each in $S$ and $G \setminus S$. To this end, we compute the lists of feasible assignments of colors for each vertex in $G \setminus S$, based on $\tau$. In particular, a pair of colors $\{i,j\}$ belongs to the feasibility list $\ell(u)$ of a vertex $u \in G\setminus S$ if there is a way of coloring the edges incident on $u$ with the colors $i$ and $j$ while respecting the palette $\tau$. In other words, one of the colors $i$ or $j$ appears in $\tau(v)$ for every $v \in N(u)$. If such a list is empty, then we know that no feasible extension of $\tau$ exists. On the other hand, if the list contains a unique set, then we may color the edges incident on $u$ according to the unique possibility. 

Other than the special cases above, we know, for the same reasons as in the proof of Proposition~\ref{Claim:expkernel_correct}, that these lists either have constant size, or have one color in common. When the lists have one color in common, then this color can be removed from $[k]\setminus X$, as such a color will be used by any coloring $c$ that respects $\tau$.

For lists $\ell(u)$ of constant size, as long as at least two elements in the list contain a color from  $[k]\setminus X$, we branch on such elements. Note that the depth of branching is bounded by  $[k]\setminus X$ and the width is bounded by a constant (at most $10$, see Proposition~\ref{Claim:expkernel_correct}). If exactly one element in $\ell(u)$ contains a color from $[k]\setminus X$, then we color $u$ according to that element. If no elements in $\ell(u)$ contain colors from $[k]\setminus X$, then color $u$ according to any element in the list of its feasible assignments. 

Finally, we are left with a situation where some colors from $[k]\setminus X$ still need to be assigned, and the only vertices from $G \setminus S$ that are left are those whose lists contain a common color.  Now this is a question of whether every color that remains in $[k]\setminus X$ can be matched to a vertex from $G \setminus S$ whose feasibility list contains that color. To this end, we construct the bipartite graph $H = ((A \cup B),E)$ as follows. The vertex set $A$ has one vertex for every color in $[k] \setminus X$. The vertex set $B$ has one vertex for every $u \in G \setminus S$ for which the feasibility list of $u$ has a common color. For $i \in A$ and $u \in B$, add the edge $(i,u)$ if $\ell(u)$ has a set which contains $i$. Now we compute a maximum matching $M$ in $H$, and it is easy to see that the remaining colors can be realized if and only if $M$ saturates $A$ (see also the pseudocode for function~{\bf CheckAcross}, Algorithm~\ref{func:part2}).

This brings us to the main result of this section.

\begin{theorem}
There is an algorithm with running time $\OO^*((20k)^k)$ \name{Exact Edge $2$-Coloring}.
\end{theorem}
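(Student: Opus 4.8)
The plan is to establish the correctness and the running time of the algorithm described above. Recall its structure. Using Proposition~\ref{Claim:maxmatching} we first dispose of instances with fewer than $k$ edges and of instances whose maximum matching has size at least $k-1$; afterwards the set $S$ of endpoints of a maximum matching is a vertex cover with $|S|\le 2k-4$, and $V\setminus S$ is an independent set, so every edge has an endpoint in $S$. The algorithm then (i) guesses a palette assignment $\tau:S\to\binom{[k]}{2}\cup[k]\cup\{\emptyset\}$, (ii) guesses the set $X\subseteq[k]$ of colours to be realized inside $G[S]$, (iii) invokes \textsc{CheckTop} to colour $G[S]$ respecting $\tau$ and using exactly $X$, and (iv) invokes \textsc{CheckAcross} to colour the edges between $S$ and $V\setminus S$, realizing the colours of $[k]\setminus X$.

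\emph{Correctness.} For soundness, suppose some branch returns \YES. From it we read off $\tau$, $X$, the colouring of $G[S]$ computed by \textsc{CheckTop}, the colouring of the crossing edges computed by \textsc{CheckAcross}, and an $A$-saturating matching $M$ of the bipartite graph constructed at the end of \textsc{CheckAcross}. Every edge of $G[S]$ received a colour from $\tau(u)\cap\tau(v)$ and every crossing edge a colour lying in the $\tau$-value of its endpoint in $S$ and in the chosen size-$\le 2$ feasibility-list element of its other endpoint; hence the result is a genuine edge colouring, it is $2$-valid because the set of colours incident to any $v\in S$ is contained in $\tau(v)$ and the set incident to any $v\notin S$ has size at most two, and it uses all $k$ colours since \textsc{CheckTop} returns \YES only after $X$ is exhausted and $M$ witnesses a realization of every colour of $[k]\setminus X$. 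For completeness, let $c$ be a $2$-valid colouring of $G$ using exactly $k$ colours and put $\tau:=c^{\dag}|_S$ and $X:=\bigcup_{e\in G[S]}c(e)$. Since $S$ is a vertex cover we have $\bigcup_{v\in S}\tau(v)=[k]$, and the endpoints of each edge of $G[S]$ have intersecting $\tau$-values, so $\tau$ and $X$ pass the sanity checks. Following the branch of \textsc{CheckTop} that agrees with $c$ on $G[S]$ exhausts $X$ and returns \YES; and in \textsc{CheckAcross} the structural dichotomy inherited from Proposition~\ref{Claim:expkernel_correct} ensures that every $u\notin S$ whose feasibility list is not of bounded size has all its list members through a common colour, which is thereby forced to be used and removed from the target set, while the colours that remain can be realized simultaneously if and only if the bipartite graph has a matching saturating $A$; the colouring $c$ itself exhibits one such matching, so the branch that follows $c$ returns \YES.

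\emph{Running time.} The running time is the size of the search tree times a polynomial factor for the feasibility-list computations and the bipartite matching. The efficient way to bound the tree is to account for the guesses colour by colour: for each of the $k$ colours we commit to an \emph{anchor} in $S$, a vertex whose palette will contain that colour, which is at most $|S|\le 2k-4$ choices, together with at most $10$ further options describing the local structure at the anchor (the branching width is $2$ in \textsc{CheckTop} and at most $10$ in \textsc{CheckAcross} by the Case~1 count in the proof of Proposition~\ref{Claim:expkernel_correct}); these choices determine $\tau$ and $X$. One must verify that restricting to anchored palette assignments loses no generality, which follows from a local recolouring exchange applied to a putative solution. Altogether there are at most $\big((2k-4)\cdot 10\big)^k\le(20k)^k$ leaves, each processed in time polynomial in $n$, giving total running time $\OO^*\big((20k)^k\big)$.

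\emph{The main obstacle.} Two points demand genuine care. First, the structural dichotomy underlying \textsc{CheckAcross}: that a feasibility list which is not of bounded size has all its members through a single colour, and that once the corresponding vertices are set aside the residual assignment problem is exactly that of finding an $A$-saturating matching --- including the subtle cases where a single vertex could carry two target colours. This amounts to re-running the pairwise-intersecting-family argument of Proposition~\ref{Claim:expkernel_correct} in the setting where the palettes of $S$ are already fixed. Second, the accounting that keeps the base of the exponential linear in $k$: it is essential to charge to each colour only the choice of an anchor vertex in $S$ and a constant-width local branch, rather than a full $\binom{[k]}{2}$-palette to each of the up to $2k-4$ vertices of $S$, and to justify that this restriction is without loss of generality. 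I expect the first of these to be the more delicate.
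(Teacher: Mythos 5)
Your correctness discussion is fine and essentially mirrors what the paper does (the paper itself just defers correctness to the algorithm description, and your soundness/completeness sketch of \textsc{CheckTop} and \textsc{CheckAcross} is consistent with it). The genuine gap is in the running-time analysis, and it sits exactly at the step you invented to reach the bound $(20k)^k$: the claim that committing, for each of the $k$ colours, to an anchor vertex in $S$ plus ``at most $10$ further options describing the local structure at the anchor'' determines the entire palette assignment $\tau$ and the set $X$. This is not justified and does not look true as stated. An anchor map only fixes, for each colour, one vertex whose palette contains it; a vertex of $S$ that anchors no colour (and there can be $\Theta(k)$ of these, since $|S|$ can be close to $2k$ while there are only $k$ colours and each vertex anchors at most two) still needs a palette drawn from $\binom{[k]}{2}\cup[k]\cup\{\emptyset\}$, contributing roughly $k^{2}$ choices per such vertex that your count never pays for. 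Your escape hatch --- ``restricting to anchored palette assignments loses no generality, which follows from a local recolouring exchange'' --- is precisely the missing lemma: the palettes of non-anchor vertices are constrained globally by all their incident edges, and no exchange argument is given (or obvious) that lets you normalise them away. Without that lemma your search tree does not cover all candidate $\tau$, so the completeness direction of your argument breaks for the restricted enumeration.

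For comparison, the paper does not use anchors at all: it brute-forces $\tau$ on $S$ and $X\subseteq[k]$, and then adds the $2$-way branching of \textsc{CheckTop} (cost $2^{|X|}$) and the $10$-way branching of \textsc{CheckAcross} (cost $10^{|[k]\setminus X|}$). You should be aware that the paper's own arithmetic is loose here (enumerating palettes over all of $S$, with $|S|\le 2k-4$, costs about $\bigl(k+\binom{k}{2}\bigr)^{|S|}$, not $\bigl(k+\binom{k}{2}\bigr)^{k}$, and even the latter times $2^{k}\cdot 10^{k}$ is not $(20k)^{k}$), so the stated constant in the theorem cannot be recovered by a naive count either; but that does not rescue your argument. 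As written, your proposal proves at best a $k^{O(k)}$ bound via exhaustive palette enumeration, and the specific $(20k)^k$ claim rests on the unproven anchoring/exchange step.
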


\begin{proof}
The correctness is accounted for in the description of the algorithm. Guessing the palette assignment requires time $\OO^*((k + {k \choose 2})^k)$ and guessing $X \subseteq [k]$ incurs an expense of $2^k$. We note that the only branching steps happen in lines 22---28 of {\bf CheckTop} and lines 47---52 in {\bf CheckAcross}. The former is a two-way branching with a cost of $2^{|X|}$ and the latter is a $10$-way branching with a cost of $10^{|[k]\setminus X|}$. Overall, therefore, the running time of these branching steps is bounded by $10^k$. Therefore, the overall running time is bounded by $\OO^*((20k)^k)$, as desired.
\end{proof}

\makeatletter{}\begin{function*}[h]
\label{func:part2}
\SetAlgoLined
\DontPrintSemicolon
\KwIn{$G,k,S,\tau,c,X$, an instance of \name{Max Edge $2$-Coloring}, with a partial edge coloring $c$, a vertex cover $S$, palette assignment $\tau$ to vertices in $S$, and a subset of colors $X$.}
\KwOut{\YES~if $G$ has a $2$-valid edge coloring that uses the colors in $X$ in $G \setminus E(S)$,  while extending $c$ and respecting $\tau$, \NO~otherwise.}

\begin{multicols}{2}
    \For{$u \in G\setminus S$}{
    $\ell(u) \leftarrow \{\emptyset\}$\;
    ${\mathcal P}(u) = \bigcup_{v \in N(u)}(\tau(v)) \subseteq [k]$\;
    \For{$Y \in {\mathcal P} \cup {{\mathcal P}(u) \choose 2}$}{
        \For{$v \in N(u)$}{
            \uIf{$\tau(v) \cap Y \neq \emptyset$}{
                $\ell(u) \leftarrow \ell(u) \cup Y$\; 
            }
        }
        }
    }

    \For{$u \in G\setminus S$}{
    \uIf{$\ell(u) = \emptyset$}{\KwRet{\NO}}
    \uIf{$|\ell(u)| = 1$}{$Z \in \ell(u)$ \; 
    \For{$v \in N(u)$}{$c(u,v) = \tau(v) \cap Z$}
    $X \leftarrow X \setminus Z$\;
    Continue;
    }
    
    \uIf{$\bigcap_{Y \in \ell(u)} Y \neq \emptyset$}{
        $i := \bigcap_{Y \in \ell(u)} Y$\;
        $X \leftarrow X \setminus \{i\}$\;
        $g(u) := 1$\;}
    \Else{$g(u) := 0$}
    }
    
    ${\mathcal B} := \{u \in G\setminus S ~|~ g(u) = 0\}$\;
    ${\mathcal G} := \{u \in G\setminus S ~|~ g(u) = 1\}$\;
    \For{$u \in {\mathcal B}$}{
        $f = 0$\;
        \For{$Y \in \ell(u)$}{
            \uIf{$Y \cap X = \emptyset$}{
                $f = 1$\;}
        }
        \uIf{$f = 0$}{
            \For{$Y \in \ell(u)$}{
                Color $u$ according to $Y$.\;
                Let $c'$ be the new coloring.\;
                $X' \leftarrow X \setminus Y$.\;
	            $a_Y = $ CheckAcross($G,k,S,\tau,c',X'$);\;
	         }
	         \KwRet{$\max\{a_Y ~|~ Y \in \ell(u)\}.$}
    } 
    }
    
    \For{$u \in {\mathcal G}$}{
        Construct the bipartite graph $H = ((A \cup B),E)$ as follows. The vertex set $A$ has one vertex for every color in $X$. The vertex set $B$ has one vertex for every $u \in {\mathcal G}$. For $i \in A$ and $v \in B$, add the edge $(i,v)$ if $\ell(u)$ has a set which contains $i$. \;
        Compute a maximum matching $M$ in $H$.\;
       \uIf{$M$ saturates $A$}{\KwRet{\YES}}
       \Else{\KwRet{\NO}}            
     }
\end{multicols}
	    
\caption{CheckAcross($G,k,S,\tau,c,X$)}
\end{function*}

\makeatletter{}\section{Parameterizing Below an Upper Bound: A Linear Kernel}
\label{sec:dualparam}

We now address the question of whether a given graph $G = (V,E)$ admits a $2$-valid edge coloring using at least $(n-k)$ colors, where $n := |V|$. In this section, we show a polynomial kernel with parameter $k$. We note that the NP-hardness of the question is implicit in the NP-hardness of the \name{Max Edge $2$-Coloring Problem} shown in~\cite{AP10}.

The kernel is essentially obtained by studying the structure of a \YES-instances of the problem. We argue that if $G$ is a \YES-instance, $c$ is a $2$-valid edge coloring of $G$ using at least $(n-k)$ colors, and $H$ is a character subgraph of $G$ with respect to $c$, then $|V(H)|$ must be at least $(n-k)$, or in other words, $G \setminus H$ is at most $k$. We then proceed to show that the components which are not cycles in $H$ are also bounded. An easy but crucial observation is that any vertex cannot be adjacent to too many vertices whose palettes are disjoint. On the other hand, we are able to bound the number of vertices in $H$ whose palettes are not disjoint. This leads to a bound on the maximum degree of $G$ in terms of $k$. Finally, we show a reduction rule that applies to ``adjacent degree two vertices'', and this finally rounds off the analysis of the kernel size. We now formally describe the sequence of claims leading up to the kernel.

We begin by analyzing the structure of \YES-instances of the problem. Let $G = (V,E)$ be a graph that admits  a $2$-valid edge coloring using at least $(n-k)$ colors. Let $c$ be such a coloring, and let $H$ be a character subgraph with respect to $c$. Since $\Delta(H) \leq 2$, the components of $H$ comprise of paths and cycles. Let $C_1, \ldots, C_r$ denote the components of $H$ that are cycles and let $P_1, \ldots, P_s$ denote the components that are paths. Let the sizes of these components be $c_1, \ldots, c_r, p_1, \ldots, p_s$, respectively. We first claim that $s \leq k$.

\begin{proposition}
\label{claim:few_paths}
Let $c$ be a $2$-valid edge coloring of $G$ using at least $(n-k)$ colors, and let $H$ be a character subgraph with respect to $c$. If $H$ consists of $s$ paths of lengths $p_1, \ldots, p_s$ and $r$ cycles of lengths $c_1, \ldots, c_r$, then $s \leq k$.
\end{proposition}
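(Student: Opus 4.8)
The plan is to prove this by a straightforward double count of the edges and vertices of the character subgraph $H$. First I would record two facts. Since $H$ is obtained by picking exactly one edge from each color class, and the color classes partition $E(G)$, the chosen edges are pairwise distinct; hence $|E(H)|$ equals the number of colors used by $c$, which is at least $(n-k)$. Write $m := |E(H)|$, so $m \geq n-k$. Second, $H$ is a subgraph of $G$, so $|V(H)| \leq n$.

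Next I would count $|V(H)|$ component by component. Because $\Delta(H) \leq 2$ and $H$, being edge-induced, has no isolated vertices, every component of $H$ is either a cycle or a path with at least one edge. A cycle component of length $c_i$ contributes $c_i$ vertices and $c_i$ edges, whereas a path component of length $p_j$ contributes $p_j$ edges but $p_j + 1$ vertices. Summing over the $r$ cycle components and the $s$ path components, and using that the components are vertex-disjoint, gives
\[
|V(H)| = \sum_{i=1}^{r} c_i + \sum_{j=1}^{s} (p_j + 1) = |E(H)| + s = m + s .
\]

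Combining the two observations yields $m + s = |V(H)| \leq n$, and since $m \geq n-k$ we conclude $s \leq n - m \leq k$, as claimed. There is no genuine obstacle here: the only points that need a word of justification are that the selected edges are distinct (so that the edge count of $H$ really is the number of colors), which follows from the color classes being disjoint, and that each path component contributes exactly one more vertex than edge — the same elementary accounting already used in the proof of Proposition~\ref{prop:characterization}.
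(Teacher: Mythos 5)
Your proof is correct and is essentially the paper's own argument: both count $|E(H)| \geq n-k$, observe $|V(H)| = |E(H)| + s$ (each path contributing one more vertex than edge, each cycle equally many), and conclude $s \leq k$ from $|V(H)| \leq n$. The only difference is notational—you take $p_j$ to be the number of edges of a path while the paper uses the number of vertices—which does not affect the argument.
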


\begin{proof}

First, note that the number of edges in $H$ is at least $(n-k)$. With this as a starting point, we obtain:

\begin{IEEEeqnarray*}{rCl}
\sum_{i=1}^r c_i + \sum_{i=1}^s (p_i-1) & \geq & (n-k) \\
\underbrace{\left(\sum_{i=1}^r c_i + \sum_{i=1}^s p_i\right)}_{= |V(H)| \leq |V(G)|} - s & \geq & (n-k) \\
(n-s) & \geq & (n-k)\\
s & \leq & k
\end{IEEEeqnarray*}
\end{proof}

Next, we show that there are at most $k$ vertices in $G$ that are not in $H$.

\begin{proposition}
\label{claim:H_is_big}
Let $c$ be a $2$-valid edge coloring of $G$ using at least $(n-k)$ colors, and let $H$ be a character subgraph with respect to $c$. Then, $|G \setminus H| \leq k$.
\end{proposition}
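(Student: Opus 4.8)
The plan is to count edges of the character subgraph $H$ carefully, this time keeping track of the vertices of $G$ that do not appear in $H$ at all. Recall that $H$ was obtained by selecting one representative edge $e_i$ from each color class $F_i$, so $|E(H)| = k' \geq (n-k)$, where $k'$ is the number of colors used by $c$. Since $\Delta(H) \leq 2$, every component of $H$ is a path or a cycle, and writing the components as cycles $C_1,\dots,C_r$ of lengths $c_1,\dots,c_r$ and paths $P_1,\dots,P_s$ of lengths $p_1,\dots,p_s$, we have $|E(H)| = \sum_i c_i + \sum_j (p_j - 1)$ and $|V(H)| = \sum_i c_i + \sum_j p_j$. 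Combining these, $|V(H)| = |E(H)| + s \geq (n-k) + s \geq n - k$, since $s \geq 0$. Because $V(H) \subseteq V(G)$ and $|V(G)| = n$, this immediately gives $|V(G) \setminus V(H)| = n - |V(H)| \leq n - (n-k) = k$, which is the claim, once we observe that ``$G \setminus H$'' here refers to the vertices of $G$ outside $H$.

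So the core of the argument is essentially the same edge-counting inequality already used in Proposition~\ref{claim:few_paths}, just rearranged to isolate $|V(H)|$ rather than $s$; in fact one can view both propositions as reading off two consequences ($s \le k$ and $|V(G)\setminus V(H)| \le k$) from the single chain $n \geq |V(H)| = |E(H)| + s \geq (n-k) + s$. I would simply present the displayed inequality

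\begin{IEEEeqnarray*}{rCl}
|V(H)| & = & |E(H)| + s \\
& \geq & (n-k) + s \\
& \geq & n - k,
\end{IEEEeqnarray*}

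and then conclude $|V(G)\setminus V(H)| \leq k$.

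There is no real obstacle here; the only point that needs a word of care is making sure the reader understands that $H$ may genuinely omit vertices of $G$ (a vertex all of whose incident colors are ``witnessed'' by representative edges elsewhere contributes nothing to $H$), so that $G \setminus H$ is a meaningful, possibly nonempty set, and that $|E(H)| = k' \geq n-k$ uses the hypothesis that $c$ employs at least $(n-k)$ colors together with the fact that each color class is nonempty so each contributes exactly one edge to $H$. Everything else is the arithmetic above.
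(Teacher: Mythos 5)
Your argument is correct and rests on exactly the same two facts as the paper's proof: $H$ has at least $n-k$ edges (one representative per color class) and, since $\Delta(H)\leq 2$, at most $|V(H)|$ edges, which forces $|V(H)|\geq n-k$. The only cosmetic difference is that you derive the inequality directly from the path/cycle component count ($|V(H)| = |E(H)| + s$) rather than by contradiction via the degree-sum bound, so this is essentially the paper's proof.
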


\begin{proof}
Suppose, for the sake of contradiction, that $|G \setminus H| > k$. This in turn implies that $|H| < n-k$. Recall, however, that  $\Delta(H) \leq 2$, and therefore $|E(H)| \leq \frac{2|H|}{2} = |H| < n-k.$ However, since $H$ is character subgraph of $G$ with respect to a coloring that uses at least $(n-k)$ colors, we have that $|E(H)| \geq n-k$. Therefore, the above amounts to a contradiction.\end{proof}

Let ${\mathcal P}$ denote the set of endpoints of the paths $P_1, \ldots, P_s$. Notice that $|{\mathcal P}| \leq 2k$. Let ${\mathcal T}$ denote the remaining vertices in $H$, that is, ${\mathcal T} := H \setminus {\mathcal P}$. We now claim that the maximum degree of $G$ is bounded:

\begin{proposition}
\label{claim:deg5inT}
For a graph $G$ that admits a $2$-valid edge coloring using at least $(n-k)$ colors, its character subgraph is such that, any vertex $u$ in $G$ is adjacent to at most six vertices in ${\mathcal T}$.
\end{proposition}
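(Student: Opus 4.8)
The plan is to fix a vertex $u$ in $G$ and examine its neighbors that lie in ${\mathcal T}$. The key structural fact to exploit is the one already hinted at in the excerpt: since $c$ is a $2$-valid coloring, $u$ has a palette $c^\dag(u)$ of size at most two, and \emph{every} neighbor $v$ of $u$ must have $c^\dag(v) \cap c^\dag(u) \neq \emptyset$ — otherwise the edge $(u,v)$ could not be colored consistently. So if $c^\dag(u) = \{a,b\}$, every neighbor of $u$ has $a$ or $b$ in its palette. In particular this holds for every neighbor of $u$ in ${\mathcal T}$.

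The second ingredient is what membership in ${\mathcal T}$ buys us. A vertex $v \in {\mathcal T}$ is an internal vertex of a path of $H$ or a vertex of a cycle of $H$; in either case $v$ has degree exactly two in $H$, so $v$ is incident in $H$ to two distinct edges, which carry two distinct colors (recall distinct edges $e_i$ of the character subgraph have distinct colors $i$). Hence for $v \in {\mathcal T}$ the palette $c^\dag(v)$ has size exactly two, and moreover both colors of $c^\dag(v)$ are ``represented'' in $H$ at $v$. Now suppose $v, v' \in {\mathcal T}$ are two neighbors of $u$ with $c^\dag(v) = c^\dag(v') = \{a, x\}$ for the same pair. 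I would argue that this forces $v$ and $v'$ to be adjacent in $H$: the edges of $H$ at $v$ are colored $a$ and $x$, the edges of $H$ at $v'$ are colored $a$ and $x$, and since each color class contributes exactly one edge to the character subgraph $H$, the color-$a$ edge of $H$ incident to $v$ is the \emph{same} edge as the color-$a$ edge incident to $v'$ (there is only one edge of color $a$ in $H$), so $vv'$ is that edge — unless $v = v'$. The same reasoning applies to any third such vertex $v''$: it would have to be $H$-adjacent to both $v$ and $v'$ along color-$a$ and color-$x$ edges, which is impossible since $v$ already uses up both its $H$-edges on $a$ and $x$. Therefore at most two vertices of ${\mathcal T}$ adjacent to $u$ can share a given palette pair.

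To finish, I would count: the neighbors of $u$ in ${\mathcal T}$ have palettes that are $2$-subsets of $[k]$ containing $a$ or $b$, but by the argument above each such palette pair accounts for at most two of them, and — here is where the sharper bound comes from — two \emph{distinct} palette pairs $\{a,x\}$ and $\{a,y\}$ both sharing the color $a$ cannot each be realized twice in ${\mathcal T}\cap N(u)$: the two color-$a$ $H$-edges incident to the $v$'s would all have to be the single color-$a$ edge of $H$, which has only two endpoints. So among neighbors of $u$ in ${\mathcal T}$ whose palette contains $a$, at most the two endpoints of the color-$a$ edge of $H$ qualify — giving at most two — and likewise at most two whose palette contains $b$, and possibly a couple more from degenerate/boundary cases, for a total of six. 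I expect the main obstacle to be the careful bookkeeping in this last step: pinning down exactly why ``three vertices of ${\mathcal T}$ adjacent to $u$ with color $a$ in their palette'' is impossible requires tracking which $H$-edge realizes color $a$ at each of them and checking the endpoint constraint, and getting the constant to land at six (rather than, say, four or eight) depends on handling the overlap between the color-$a$ side and the color-$b$ side and whether $a,b$ themselves are the colors of $H$-edges at $u$. I would organize this as: (i) the palette-intersection observation, (ii) the ``at most one $H$-edge per color'' observation, (iii) combine to bound $|N(u) \cap {\mathcal T}|$ by $6$.
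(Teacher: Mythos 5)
Your argument is correct, and it takes a more direct route than the paper. The paper argues by contradiction and pigeonhole: among any seven neighbours of $u$ in ${\mathcal T}$, since $\Delta(H[{\mathcal T}])\leq 2$, there are three vertices pairwise non-adjacent in $H$; their palettes are exactly the (two) colors of their $H$-edges and are therefore pairwise disjoint, so $u$ would need at least three colors in its own palette, contradicting $2$-validity. You instead count directly: every neighbour $v\in{\mathcal T}$ of $u$ satisfies $c(u,v)\in c^{\dag}(u)\cap c^{\dag}(v)$, the palette of a ${\mathcal T}$-vertex consists exactly of the colors of its two $H$-edges, and each color has a unique representative edge in $H$, so $v$ must be an endpoint of $e_a$ or $e_b$ where $c^{\dag}(u)\subseteq\{a,b\}$. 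Note that, carried out cleanly, this gives $|N(u)\cap{\mathcal T}|\leq 2\cdot|c^{\dag}(u)|\leq 4$, which is stronger than the claimed six; your closing worry about ``degenerate/boundary cases'' adding a couple more vertices, and about whether the constant lands at four, six, or eight, is unfounded --- there is no overlap issue to handle, since a vertex whose palette contains both $a$ and $b$ is only counted once, and whether $a,b$ are colors of $H$-edges at $u$ is irrelevant to counting $u$'s neighbours. So the two proofs share the same two structural facts (palette of a ${\mathcal T}$-vertex equals its $H$-edge colors; one $H$-edge per color), but your per-color endpoint count avoids the independence-number step and yields a sharper constant, while the paper's pigeonhole argument only needs the weaker fact that non-adjacent ${\mathcal T}$-vertices have disjoint palettes.
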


\begin{proof}
Let $c$ be a $2$-valid edge coloring of $G$ using at least $(n-k)$ colors, and let $H$, ${\mathcal P}$ and ${\mathcal T}$ be defined as above.

Suppose, for the sake of contradiction, that there is a vertex $u \in G$ that has more than six neighbors in ${\mathcal T}$. Since $\Delta(H[{\mathcal T}]) \leq 2$, in any subset of seven vertices of ${\mathcal T}$, there is at least one triplet of vertices, say $x,y,$ and $z$ that are mutually non-adjacent in $H$. By definition of $H$ and ${\mathcal T}$, we know that the palettes of $x$, $y$ and $z$ with respect to $c$ have two colors each and are mutually disjoint:

$$c^{\dag}(x) \cap c^{\dag}(y) = \emptyset; c^{\dag}(x) \cap c^{\dag}(z) = \emptyset; \text{ and } c^{\dag}(y) \cap c^{\dag}(z) = \emptyset.$$

It follows that $|c^{\dag}(x)| = |c^{\dag}(y)| = |c^{\dag}(z)| = 2$. Since $u$ is adjacent to $x,y$ and $z$, we conclude that there is no way to extend $c$ to a $2$-valid coloring of the edges $(u,x)$, $(u,y)$ and $(u,z)$. Therefore, we contradict our assumption that $c$ is a $2$-valid edge coloring of $G$ using at least $(n-k)$ colors, and conclude that all vertices in $G$ have at most six neighbours in ${\mathcal T}$.\end{proof}

The following corollary is implied by the fact that there are at most $3k$ vertices in the graph other than ${\mathcal T}$. 

\begin{corollary}
\label{claim:Delta_is_small}
Let $G$ be a graph that admits a $2$-valid edge coloring using at least $(n-k)$ colors. Then $\Delta(G) \leq 3k+6$.
\end{corollary}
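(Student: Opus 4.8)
The plan is to bound the degree of an arbitrary vertex of $G$ by splitting its neighbourhood according to membership in ${\mathcal T}$. First I would fix a $2$-valid edge coloring $c$ of $G$ using at least $(n-k)$ colors, take a character subgraph $H$ with respect to $c$, and recall the partition of $V(G)$ into ${\mathcal T}$ and $V(G) \setminus {\mathcal T}$. By the definitions of ${\mathcal P}$ and ${\mathcal T}$, the set $V(G) \setminus {\mathcal T}$ consists precisely of the endpoints of the path components of $H$ (the set ${\mathcal P}$) together with the vertices of $G$ that do not appear in $H$ at all (the set $G \setminus H$).

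Next I would bound this second set by invoking the earlier propositions. Proposition~\ref{claim:few_paths} gives at most $k$ path components in $H$, so $|{\mathcal P}| \leq 2k$; Proposition~\ref{claim:H_is_big} gives $|G \setminus H| \leq k$. Hence $|V(G) \setminus {\mathcal T}| \leq |{\mathcal P}| + |G \setminus H| \leq 3k$. Then for an arbitrary vertex $u \in V(G)$, I would write its degree as $\deg(u) = |N(u) \cap {\mathcal T}| + |N(u) \setminus {\mathcal T}|$; the first summand is at most $6$ by Proposition~\ref{claim:deg5inT}, and the second is trivially at most $|V(G) \setminus {\mathcal T}| \leq 3k$. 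Adding these yields $\deg(u) \leq 3k+6$, and since $u$ was arbitrary, $\Delta(G) \leq 3k+6$.

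I do not anticipate any real obstacle: the corollary is pure bookkeeping over the three preceding propositions. The only point requiring a moment's care is verifying that every vertex of $G$ lying outside ${\mathcal T}$ has genuinely been accounted for — i.e., that it is either a path endpoint of $H$ or a vertex missing from $H$ — which is immediate from how ${\mathcal P}$ and ${\mathcal T}$ were defined, together with the fact that a component of $H$ is either a cycle (contributing no vertices to ${\mathcal P}$) or a path (contributing exactly two endpoints, or one if it is a single vertex, which does not arise since $H$ has no isolated vertices). Thus the stated bound follows.
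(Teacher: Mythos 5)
Your proposal is correct and follows exactly the paper's reasoning: the paper justifies the corollary by noting that at most $3k$ vertices lie outside ${\mathcal T}$ (combining Propositions~\ref{claim:few_paths} and~\ref{claim:H_is_big}) while Proposition~\ref{claim:deg5inT} caps the neighbours inside ${\mathcal T}$ at six, which is precisely your degree split. Your extra check that every vertex outside ${\mathcal T}$ is either a path endpoint or absent from $H$ is a harmless elaboration of the same bookkeeping.
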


We now state the reduction rules that define the kernelization.

\begin{enumerate}
\item[{\bf (R1)}] If $\Delta(G) > 3k+6$, then return a trivial \NO-instance.
\item[{\bf (R2)}] Let $u$ and $v$ be adjacent vertices with $d(u) = d(v) = 2$, and let $v'$ be the other neighbor of $v$. Delete $v$ and add the edge $(u,v')$. Let the graph obtained thus be denoted by $H$. Then the reduced instance is $(H,n^*-k)$, where $n^* = |V(H)| = (n-1)$. Notice that the parameter does not change.
\end{enumerate}

\makeatletter{}
\begin{figure}[h!]
\begin{center}
\begin{tikzpicture}[thick]
\tikzstyle{every node}=[circle, fill=SlateBlue,
                        inner sep=0pt, minimum width=4pt];

\draw \foreach \x in {140,170,200,230} {
        ([shift={(9,0.3)}]\x:1.2) node (A-\x) {} 
 };
\tikzstyle{every node}=[circle, fill=SlateBlue,
                        inner sep=0pt, minimum width=6pt,shift={(9,0.3)}];

\node[minimum width=10pt,fill=YellowGreen] (e) at (0,0) {{\color{white}$v'$}};

\tikzstyle{every node}=[circle, fill=SlateBlue,
                        inner sep=0pt, minimum width=4pt,shift={(0,0)}];

\draw \foreach \x in {140,170,200,230}{
(A-\x) -- (e)
};

\tikzstyle{every node}=[circle, fill=YellowGreen,
                        inner sep=0pt, minimum width=5pt,shift={(9,0.3)}]

\node[minimum width=10pt,fill=YellowGreen] (f1) at (2,0) {{\color{white}$v$}};
\draw (e) -- (f1);

\node[minimum width=10pt,fill=YellowGreen] (f2) at (4,0) {{\color{white}$u$}};
\draw (f1) -- (f2);

\node[minimum width=10pt,fill=SlateBlue] (g) at (6,0) {};
\draw (f2) -- (g);

\tikzstyle{every node}=[circle, fill=SlateBlue,
                        inner sep=0pt, minimum width=4pt,shift={(6,0)}]
                        
\draw \foreach \x in {40,10,340,310} {
        ([shift={(9,0.3)}]\x:1.2) node (B-\x) {} 
 };

\draw \foreach \x in {40,10,340,310}{
(B-\x) -- (g)
};

\tikzstyle{every node}=[circle, fill=SlateBlue,
                        inner sep=0pt, minimum width=6pt]
\end{tikzpicture}
\end{center}

\[ \Downarrow \]
\begin{center}
\begin{tikzpicture}[thick]
\tikzstyle{every node}=[circle, fill=SlateBlue,
                        inner sep=0pt, minimum width=4pt]

\draw \foreach \x in {140,170,200,230} {
        ([shift={(9,0.3)}]\x:1.2) node (A-\x) {} 
 };
\tikzstyle{every node}=[circle, fill=SlateBlue,
                        inner sep=0pt, minimum width=6pt,shift={(9,0.3)}]

\node[minimum width=10pt,fill=YellowGreen] (e) at (0,0) {{\color{white}$v'$}};

\tikzstyle{every node}=[circle, fill=SlateBlue,
                        inner sep=0pt, minimum width=4pt,shift={(0,0)}]

\draw \foreach \x in {140,170,200,230}{
(A-\x) -- (e)
};

\tikzstyle{every node}=[circle, fill=SlateBlue,
                        inner sep=0pt, minimum width=4pt,shift={(9,0.3)}]

\node[minimum width=10pt,fill=YellowGreen] (f2) at (4,0) {{\color{white}$u$}};
\draw (e) -- (f2);

\node[minimum width=10pt,fill=SlateBlue] (g) at (6,0) {};
\draw (f2) -- (g);

\tikzstyle{every node}=[circle, fill=SlateBlue,
                        inner sep=0pt, minimum width=4pt,shift={(6,0)}]
                        
\draw \foreach \x in {40,10,340,310} {
        ([shift={(9,0.3)}]\x:1.2) node (B-\x) {} 
 };

\draw \foreach \x in {40,10,340,310}{
(B-\x) -- (g)
};

\tikzstyle{every node}=[circle, fill=YellowGreen,
                        inner sep=0pt, minimum width=6pt]
\end{tikzpicture}
\caption{A reduction rule for adjacent degree two vertices.}
\label{fig:degree_two}
\end{center}
\end{figure}
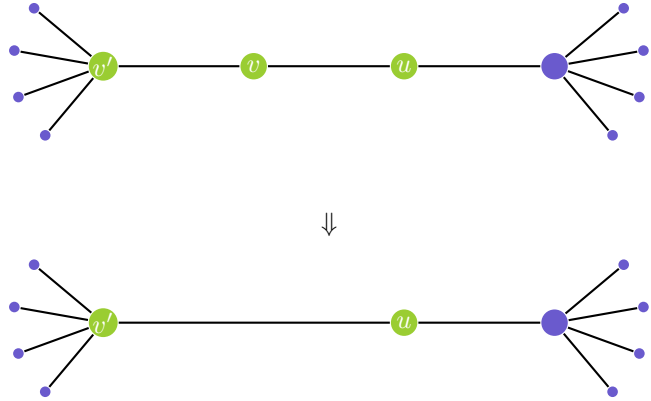

It is easy to see that both the reduction rules above can be executed in linear time. The correctness of {\bf (R1)} follows from Corollary~\ref{claim:Delta_is_small}. We now show the correctness of the second reduction rule.

\begin{proposition}
\label{claim:r2_is_correct}
Let $G$ be a graph where vertices $u$ and $v$ are adjacent, and $d(u) = d(v) = 2$. Let $v'$ be the other neighbor of $v$. Let $H$ be the graph obtained from $G$ after an application of reduction rule {\bf (R2)}. The graph $G$ has a $2$-valid edge coloring that uses at least $(n-k)$ colors if and only if the graph $H$ has a $2$-valid edge coloring that uses at least $(n-k-1)$ colors.

\end{proposition}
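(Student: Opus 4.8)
Write $N(v)=\{u,v'\}$ and $N(u)=\{v,u'\}$; since $d(u)=d(v)=2$ these exhaust the neighbourhoods. I would first handle the generic situation $u'\neq v'$, in which the edge $(u,v')$ inserted by {\bf (R2)} is genuinely new, $V(H)=V(G)\setminus\{v\}$, $E(H)=\bigl(E(G)\setminus\{(u,v),(v,v')\}\bigr)\cup\{(u,v')\}$, and $u$ has degree two in $H$ with $N_H(u)=\{u',v'\}$. The key idea is that the whole proof is a purely local surgery on the coloring around $u,v,v'$: I will leave every edge not incident to $\{u,v,v'\}$ exactly as it is, recolor so that the palette of $v'$ is \emph{literally unchanged}, and arrange that only $u$'s palette is rearranged (together with $v$ being destroyed or created). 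This makes all the $2$-validity checks immediate, since every vertex other than $u,v,v'$ keeps the same incident edges with the same colours.

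\textbf{The forward direction ($G$ gives $H$).} Given a $2$-valid coloring $c$ of $G$ using at least $(n-k)$ colours, I would define $c'$ on $H$ to agree with $c$ on $E(G)\setminus\{(u,v),(v,v')\}$ and set $c'(u,v'):=c(v,v')$. Then at $v'$ the incident edge $(v,v')$ has simply been swapped for the incident edge $(u,v')$ of the same colour, so $c'^{\dag}(v')=c^{\dag}(v')$; the only edges incident to $u$ in $H$ are $(u,u')$ and $(u,v')$, so $|c'^{\dag}(u)|\leq 2$ automatically; and every other palette is untouched. The only colour of $c$ that can fail to survive is $c(u,v)$ --- every other colour either occurs on an edge of $E(G)\setminus\{(u,v),(v,v')\}$ or equals $c(v,v')$, which reappears on $(u,v')$ --- so $c'$ uses at least $(n-k)-1=n^{*}-k$ colours.

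\textbf{The reverse direction ($H$ gives $G$).} Given a $2$-valid coloring $c'$ of $H$ using at least $(n^{*}-k)$ colours, I would pick a brand-new colour $\gamma$ (colours are only labels, so a fresh one is always available), let $c$ agree with $c'$ on the shared edge set $E(G)\setminus\{(u,v),(v,v')\}$, and set $c(v,v'):=c'(u,v')$ and $c(u,v):=\gamma$. Again $c^{\dag}(v')=c'^{\dag}(v')$ (the edge $(u,v')$ has been replaced by $(v,v')$ of the same colour), while $c^{\dag}(u)=\{c'(u,u'),\gamma\}$ and $c^{\dag}(v)=\{\gamma,c'(u,v')\}$ each have size at most two, and all remaining palettes are unchanged, so $c$ is $2$-valid. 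No colour of $c'$ is lost (the only deleted edge $(u,v')$ donated its colour to $(v,v')$) and $\gamma$ is new, so $c$ uses at least $(n^{*}-k)+1=n-k$ colours. Spending a fresh colour on the reinserted edge $(u,v)$ is exactly what absorbs the imbalance created by {\bf (R2)} decreasing $n$ by one while keeping $k$ fixed.

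\textbf{The delicate case.} The configuration that does not fit this template is $u'=v'$, i.e.\ when $u,v,v'$ form a triangle: then $(u,v')$ already exists, so {\bf (R2)} effectively just deletes $v$ and leaves $u$ of degree one, and the naive restriction used in the forward direction can now shed \emph{two} colours instead of one. I expect this to be the main obstacle; it can be dealt with either by directly rebuilding an optimal coloring of the small local gadget that remains, or by reading $H$ as a multigraph carrying a doubled $u$--$v'$ edge, under which the surgery above applies verbatim. Everything else is the routine bookkeeping of the two displays above.
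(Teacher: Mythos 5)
Your two main displays are exactly the paper's proof: in the forward direction you transfer the colour of $(v,v')$ onto the new edge $(u,v')$ and lose only the colour of $(u,v)$, and in the reverse direction you copy $c'(u,v')$ onto $(v,v')$ and spend a fresh colour on $(u,v)$; the $2$-validity bookkeeping (only $u$, $v$, $v'$ are touched, and $c^{\dag}(v')$ is preserved) is also the same. So for the generic configuration $u'\neq v'$ your argument is correct and essentially identical to the paper's.

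The interesting part of your proposal is the ``delicate case'' $u'=v'$, which the paper does not mention at all: its proof tacitly assumes $(u,v')$ is a new edge. Your suspicion is justified, and in fact the proposition as literally stated \emph{fails} there: take $G$ to be a triangle on $u,v,v'$ and $k=0$. Then $\sigma(G)=3=n$ (a triangle is a two-factor), so $G$ is a \YES-instance, but $H$ is a single edge on two vertices with $\sigma(H)=1<n-k-1=2$. So this case cannot be ``proved''; it has to be excluded (apply {\bf (R2)} only when $u$ and $v'$ are non-adjacent, handling a triangle component directly, which is easy since its contribution to $\sigma$ is known) or the rule itself must be modified. Your two suggested patches do not quite do this as stated: reading $H$ as a multigraph with a doubled $u$--$v'$ edge changes the problem (the paper works with simple graphs, and the kernel must be an instance of the same problem), and ``rebuilding the optimal coloring of the local gadget'' cannot succeed for the unmodified rule, as the counterexample shows. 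So: same proof as the paper where the paper's proof works, plus a correctly identified gap that is actually a (small, fixable) error in the paper rather than something your write-up could have closed as stated.
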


\begin{proof}

We begin by assuming that $G$ has a $2$-valid edge coloring that uses $(n-k)$ colors. Let $c$ be such a coloring for $G$.  We propose the following edge coloring for $H$: the edge $(u,v')$ is colored with $c(v,v')$, and all other edges are colored according to the color given by $c$ to the corresponding edges in $G$. More precisely,
\begin{equation*}
c^*(e) = \left\{
\begin{array}{cl}
c(v,v') & \text{if } e = (u,v')\\
c(e) & \text{otherwise.}
\end{array} \right.
\end{equation*}

Notice that $c^*$ uses at least $(n-k-1)$ colors, since at most one color was lost in the recoloring, namely the color given to the edge $(u,v)$. It is easily checked that $c^*$ is $2$-valid.

On the other hand, suppose $H$ has a $2$-valid edge coloring, $c^*$, that uses at least $(n-k-1)$ colors. Suppose the set of colors used is $[l]$, where $l\geq n-k-1$. Now consider the following coloring for $G$ based on $c^*$:

\begin{equation*}
c(e) = \left\{
\begin{array}{cl}
c^*(u,v') & \text{if } e = (v',v)\\
l+1 & \text{if } e = (u,v)\\
c^*(e) & \text{otherwise.}
\end{array} \right.
\end{equation*}

In other words, we color the edge $(u,v)$ with a new color, and color $(v,v')$ with the color used on the edge $(u,v')$, and all other edges are colored by mimicking $c^*$. The coloring $c^*$ is clearly $2$-valid: the only vertices that are affected are $v'$, $v$ and $u$, but $v$ and $u$ are vertices of degree two and therefore any coloring is $2$-valid with respect to them, and we observe that $c(F_{v'}) = c^*(F_{v'})$. Notice that $c$ uses at least $(n-k)$ colors, one more than the number of colors used by $c^*$.
\end{proof}

Observe that Proposition~\ref{claim:r2_is_correct} implies the correctness of {\bf (R2)}. We now turn to an analysis of the size of the kernel.

\begin{lemma}
If $(G,n-k)$ is a \YES-instance of \name{$(n-k)$-Edge $2$-Coloring} that is reduced with respect to {\bf (R2)}, then $|V(G)| = O(k)$.
\end{lemma}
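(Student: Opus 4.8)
The plan is to exploit the structure of a character subgraph. Fix a $2$-valid edge colouring $c$ of $G$ using some number $k' \geq n-k$ of colours, and let $H$ be a character subgraph with respect to $c$, with representative edges $e_1,\dots,e_{k'}$. I partition $V(G)$ into $V(G)\setminus V(H)$, the set $\mathcal{P}$ of path endpoints of $H$, and $\mathcal{T} := V(H)\setminus\mathcal{P}$. By Proposition~\ref{claim:H_is_big}, $|V(G)\setminus V(H)|\leq k$, and by Proposition~\ref{claim:few_paths} the graph $H$ has at most $k$ path components, so $|\mathcal{P}|\leq 2k$; hence it suffices to show $|\mathcal{T}| = O(k)$. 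Write $B := (V(G)\setminus V(H))\cup\mathcal{P}$, so $|B|\leq 3k$. Two facts will be used repeatedly: (i) every $w\in\mathcal{T}$ has $\deg_H(w) = 2$, and since $c$ is $2$-valid, $c^{\dag}(w)$ is then forced to consist of exactly the two (distinct) colours of the two edges of $H$ at $w$; (ii) Proposition~\ref{claim:deg5inT}, i.e.\ every vertex of $G$ has at most six neighbours in $\mathcal{T}$.

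The first real step is a lemma about edges of $G$ not lying in $E(H)$, which I call \emph{extra} edges: no extra edge has both endpoints in $\mathcal{T}$. Indeed, if $e=(a,b)$ is an extra edge of colour $i$ with $a\in\mathcal{T}$, then $i\in c^{\dag}(a)$, so by (i) some edge of $H$ at $a$ has colour $i$, and the only colour-$i$ edge of $H$ is $e_i$; hence $e_i$ is incident to $a$. If also $b\in\mathcal{T}$, the same argument makes $e_i$ incident to $b$, forcing $e_i = (a,b) = e \in E(H)$, a contradiction. Consequently every extra edge has an endpoint in $B$, and since each vertex of $B$ has at most six neighbours in $\mathcal{T}$ by (ii), the number of extra edges incident to $\mathcal{T}$ is at most $6|B|\leq 18k$. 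Now, since $\deg_H(w)=2$ for every $w\in\mathcal{T}$, the quantity $\sum_{w\in\mathcal{T}}(\deg_G(w)-2)$ equals the number of extra edges incident to $\mathcal{T}$ (each contributing exactly once, by the lemma); as every $w\in\mathcal{T}$ with $\deg_G(w)\geq 3$ contributes at least one, the set $\mathcal{T}_{\geq 3} := \{w\in\mathcal{T}\ :\ \deg_G(w)\geq 3\}$ satisfies $|\mathcal{T}_{\geq 3}|\leq 18k$.

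It remains to bound $\mathcal{T}_2 := \mathcal{T}\setminus\mathcal{T}_{\geq 3}$, the degree-two vertices of $\mathcal{T}$. Here the hypothesis that the instance is reduced with respect to {\bf (R2)} enters: no two adjacent vertices of $G$ can both have degree two, so $\mathcal{T}_2$ is an independent set. For $w\in\mathcal{T}_2$, both edges at $w$ lie in $H$, so both neighbours of $w$ lie in $V(H)$, and by independence neither lies in $\mathcal{T}_2$; thus both lie in $\mathcal{P}\cup\mathcal{T}_{\geq 3}$. Counting the $2|\mathcal{T}_2|$ edges at $\mathcal{T}_2$ from the other endpoint and using (ii) once more (each vertex of $\mathcal{P}\cup\mathcal{T}_{\geq 3}$ has at most six neighbours in $\mathcal{T}$, hence at most six in $\mathcal{T}_2$) yields $2|\mathcal{T}_2|\leq 6(|\mathcal{P}|+|\mathcal{T}_{\geq 3}|)\leq 6(2k+18k)$, so $|\mathcal{T}_2|\leq 60k$. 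Altogether $|V(G)| \leq |B| + |\mathcal{T}_{\geq 3}| + |\mathcal{T}_2| \leq 3k + 18k + 60k = O(k)$.

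I expect the main obstacle to be the extra-edge lemma of the second paragraph; once it is in place, the rest is bounded-degree-style bookkeeping built on Proposition~\ref{claim:deg5inT} and the {\bf (R2)} independence property. The point that makes the lemma work is fact (i): a vertex of $\mathcal{T}$ has degree two in $H$, which pins its palette to the colours of its two representative edges and leaves no slot for the colour of an extra edge to occupy without forcing that extra edge into $H$.
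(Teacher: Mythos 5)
Your proof is correct and takes essentially the same route as the paper's: the same decomposition of $V(G)$ into $G \setminus H$, the path endpoints ${\mathcal P}$, and ${\mathcal T}$, the same reliance on Propositions~\ref{claim:few_paths}, \ref{claim:H_is_big} and \ref{claim:deg5inT}, and the same use of reducedness under {\bf (R2)} to bound the degree-two vertices of ${\mathcal T}$. The only difference is cosmetic: your explicit ``extra-edge lemma'' (and the slightly different constants it yields) spells out a step the paper asserts implicitly, namely that a vertex of ${\mathcal T}$ with degree at least three in $G$ must have a neighbour outside ${\mathcal T}$.
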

\begin{proof}
Since $G$ is a \YES-instance, it admits  a $2$-valid edge coloring $c$ using at least $(n-k)$ colors. Let $H$ be a character subgraph with respect to $c$. Let $C_1, \ldots, C_r$ denote the components of $H$ that are cycles and let $P_1, \ldots, P_s$ denote the components that are paths.

Let ${\mathcal P}$ denote the set of endpoints of the paths $P_1, \ldots, P_s$ and let ${\mathcal T}$ denote the remaining vertices in $H$, that is, ${\mathcal T} := H \setminus {\mathcal P}$. Let $|{\mathcal P}_1|=|G\setminus H|+|{\mathcal P}|$. By Proposition~\ref{claim:deg5inT}, we know that every vertex in $G$, has at most six neighbors in ${\mathcal T}$. Since $|{\mathcal P}_1| \leq 3k$ (this follows from Proposition~\ref{claim:few_paths}), the number of vertices in ${\mathcal T}$ that have neighbors in ${\mathcal P}_1$ is at most $3k \cdot 6 = 18k$. Notice that all other vertices in ${\mathcal T}$ have degree two in $G$. Therefore, we conclude that the number of vertices of $G$ that have degree three or more is at most $3k + 18k = 21k$.

We now have that $|{\mathcal P}| \leq 2k$ and $|G \setminus H| \leq k$, hence it remains to bound the vertices in ${\mathcal T}$. Notice that the vertices of ${\mathcal T}$ have degree two or more in $G$. Among them, the vertices that have degree three or more in $G$ are bounded by $21k$. The vertices left are the vertices in ${\mathcal T}$ that have degree two in $G$. Since the graph is reduced with respect to {\bf(R2)}, the neighbors of these vertices have either degree one or degree three or more. Note that the number of degree one vertices is at most $|{\mathcal P}| + |G \setminus H| \leq 3k$. Hence the number of degree two vertices in ${\mathcal T}$ is at most $21k \cdot 5 + 3k = 108k$. Thus the total number of vertices in ${\mathcal T}$ is also $\OO(k)$. This concludes our argument.
\end{proof} 

\makeatletter{}\section{$C_4$-free graphs}
\label{sec:kernels}

In this section, we discuss the problem restricted to the class of $C_4$-free graphs. We first show that the \name{Max Edge $2$-Coloring} problem is NP-hard on $C_4$-free graphs, and then describe a simple argument for a polynomial kernel on this graph class.

\makeatletter{}
\subsection{NP-Hardness}

In this section, we show the NP-hardness of \name{Max Edge $2$-Coloring} on graphs that have no cycles of length four. We will first need to introduce some definitions. For a function $f: V(G) \rightarrow \{1,2\}$, we say that $G$ is {\em $f$-valid} if every vertex $v$ in the graph is incident to edges colored with at most $f(v)$ distinct colors.  We will consider the following variant of the original problem: 

\problembox{Max Edge $\{1,2\}$-Coloring}{A graph $G$, a function $f: V(G) \rightarrow \{1,2\}$, and an integer $k$}{$k$}{Is $\sigma(G) \geq k$, that is, is there a $f$-valid edge coloring of $G$ with at least $k$ colors?}

It is known from~\cite{AP10} that there is a polynomial time reduction from \name{Max Edge $\{1,2\}$-Coloring} to \name{Max Edge $2$-Coloring}. If $G$ is an instance of \name{Max Edge $\{1,2\}$-Coloring}, the reduction proposed in~\cite{AP10} modifies $G$ by adding a pendant neighbor to every vertex $v$ for which $f(v) = 1$. It is shown that this graph has a $2$-valid coloring using at least $k$ colors if, and only if, $G$ has a $f$-valid coloring using at least $k$ colors.

\begin{proposition}[Theorem 5,~\cite{AP10}]
\label{prop:12-to-2-hard}
There is a polynomial time (and parameter preserving) reduction from \name{Max Edge $\{1,2\}$-Coloring} to \name{Max Edge $2$-Coloring}.
\end{proposition}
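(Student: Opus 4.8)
Following~\cite{AP10}, the reduction is the pendant construction described above: given an instance $(G,f,k)$ of \name{Max Edge $\{1,2\}$-Coloring}, let $L=\{v\in V(G) : f(v)=1\}$ and build $G'$ from $G$ by adding, for each $v\in L$, a fresh degree-one vertex $v_p$ and the edge $vv_p$. This is computable in linear time and leaves the target value $k$ untouched, so structurally it is a polynomial-time parameter-preserving reduction; what has to be verified is the equivalence ``$G$ has an $f$-valid edge coloring with $\geq k$ colors $\iff$ $G'$ has a $2$-valid edge coloring with $\geq k$ colors.'' I would prove the two directions separately, the first being routine and the second carrying the real content.

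For the forward direction, take an $f$-valid coloring $c$ of $G$ using at least $k$ colors and extend it to $G'$ by coloring each pendant edge $vv_p$ (for $v\in L$) with a color that $c$ already places on some edge incident to $v$ (or with an arbitrary color of $c$ if $v$ is isolated in $G$). The number of colors is unchanged, hence still at least $k$, and the result is $2$-valid: pendant vertices have degree one; a vertex with $f$-value $2$ is unaffected and still sees at most two colors; and a vertex $v\in L$ saw at most one color under $c$, and the pendant edge at $v$ reuses one of those, so $v$ still sees at most one color.

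The reverse direction is where I expect the difficulty. Starting from a $2$-valid coloring $c'$ of $G'$ with at least $k$ colors, the natural plan is to first \emph{normalize} $c'$ so that for every $v\in L$ the color of $vv_p$ is different from the colors of all $G$-edges at $v$; once this holds, deleting the pendant edges yields a coloring of $G$ that is automatically $f$-valid, since such a $v$ sees at most two colors in $G'$, one of which is ``spent'' on its pendant edge, leaving at most one for its $G$-edges (vertices outside $L$ keep a palette of size at most two anyway). The obstacle — and the step I would expect to be the most delicate — is to carry out this normalization, and the subsequent removal of the pendant edges, without dropping below $k$ colors: any color that would otherwise survive only on pendant edges has to be rerouted onto a suitable $G$-edge, and the configurations that need care are the edges both of whose endpoints lie in $L$, where a local recoloring propagates along a whole connected component of such vertices, so those components must be re-colored as units. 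Working through this color-accounting (the argument of~\cite{AP10}) turns $c'$ into an $f$-valid coloring of $G$ with at least $k$ colors, and together with the forward direction and the linear-time, parameter-preserving nature of the construction, this establishes the claimed reduction.
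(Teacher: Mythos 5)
The paper does not actually prove this proposition --- it imports it from~\cite{AP10} and merely recalls the pendant construction --- so the question is whether your argument stands on its own, and it does not. The reverse direction, which you yourself identify as carrying all the content, is never carried out: ``working through this color-accounting (the argument of~\cite{AP10})'' is precisely the step that had to be supplied. Worse, the equivalence you set out to prove --- $G$ has an $f$-valid coloring with at least $k$ colors if and only if $G'$ has a $2$-valid coloring with at least $k$ colors, with the \emph{same} $k$ --- is false, so no amount of normalization can finish your plan. Take $G$ to be a single edge $uv$ with $f(u)=f(v)=1$: then $G'$ is a path on four vertices, which has a $2$-valid coloring with three colors, while $G$ admits only one color. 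Or take $G=K_{1,3}$ with $f=1$ at the center and $f=2$ at the leaves: $G'=K_{1,4}$ achieves two colors, but every $f$-valid coloring of $G$ uses one. The obstruction is exactly the one you name and then defer: a color that survives only on pendant edges cannot in general be rerouted onto a $G$-edge, because at a vertex $v$ with $f(v)=1$ all $G$-edges must end up monochromatic.

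The statement actually proved in~\cite{AP10} shifts the threshold. With $L=\{v : f(v)=1\}$, the maximum number of colors of a $2$-valid coloring of $G'$ equals the maximum number of colors of an $f$-valid coloring of $G$ plus $|L|$. For one inequality, give each pendant edge a fresh private color: a vertex of $L$ then sees its single $G$-color plus its own pendant color, which is $2$-valid. For the other, take a $2$-valid coloring of $G'$; for each $v \in L$ whose $G$-edges see two colors, globally merge those two colors (recolor every edge of one color with the other) --- this only shrinks palettes, and at such a $v$ the pendant edge duplicates a color already present on a $G$-edge; finally delete the pendant edges. The total loss is at most one color per vertex of $L$, and the restriction to $G$ is $f$-valid. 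Thus the reduction maps $k$ to $k+|L|$: it is polynomial time, which is all the NP-hardness transfer in Section~5 requires, but it is \emph{not} parameter preserving --- as the paper itself concedes immediately after the corollary, the parenthetical in the proposition notwithstanding. Your forward direction is fine; the repair is to prove the shifted equivalence via the color-merging argument rather than trying to keep $k$ fixed.
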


We will rely on this reduction to obtain the our NP-hardness results, as it simplifies our presentation greatly when we reduce to \name{Max Edge $\{1,2\}$-Coloring} instead of the original question.

We reduce from the \name{Multi-Colored Independent Set} problem, which is known to be NP-complete by a reduction from the classical NP-complete problem of finding the largest independent set. The ``multi-colored'' variant is defined as follows:

\problembox{Multi-Colored Independent Set}{A graph $G=(V,E)$, and a partition of $V$ into $k$ parts, $V= V_{1} \biguplus  \cdots \biguplus V_{k}$}{$k$}{Does there exist a subset $S\subseteq V$ such that $G[S]$ forms an independent set and $|S\cap V_{i}|=1$, for all $1 \le i \le k$?}

We first describe the construction. Let $G = (V,E)$, and $V = V_1 \biguplus \cdots \biguplus V_k$ be an instance of \name{Multi-Colored Independent Set}. Let $(H,f,l)$ denote the reduced instance; we will first describe $H$, then determine $f$ and finally fix $l$. 

To begin with, the vertex set of $H$ contains one vertex for every vertex in $G$. If $v \in G$, then we abuse notation and denote the corresponding vertex in $H$ by $v$ as well. We also add the vertices $\{g_1,\ldots,g_k\}$ and make $g_i$ a common neighbor of all vertices in $V_i$. In other words, we introduce the edges $\{(g_i,v) ~|~ v \in V_i\}$ for all $1 \leq i \leq k$. Then introduce a vertex $g$ which is adjacent to every $g_i$. For every $e \in E(G)$, where $e = (u,v)$, introduce five new vertices $\{e_u, e_u', e_3, e_v', e_v\}$, and add the edges: $(u,e_u), (g,e_3), (v,e_v), (e_u,e_{u'}), (e_{u'},e_3), (e_3,e_{v'})$ and $(e_{v'},e_v)$. This completes the description of $H$ (see also Figure~\ref{fig:c4-free-reduction}).

\makeatletter{}\begin{figure}
\label{fig:c4-free-reduction}
\centering
\begin{tikzpicture}
[outline/.style={color=SlateBlue,thin},
outline happy/.style={color=YellowGreen,thin},
general/.style={color=black,thin},]

\foreach \x in {0,2,4,8,10,12}
\draw[general] (\x,0) rectangle (\x+1,4);

\foreach \x in {0,2,4,8,10,12}{
    \foreach \y in {0.5,1,...,3.5}{
        \draw[outline] node (\x-\y) [shape=circle,fill=white,draw] at (\x+.5,\y) {};
     }
}

\draw[outline happy] node (1) [shape=circle,fill=white,draw] at (0.5,5) {$g_1$};
\draw[general] node [shape=circle,draw=none] at (2.5,5) {$\ldots$};
\draw[outline happy] node (2) [shape=circle,draw] at (4.5,5) {$g_i$};
\draw[general] node [shape=circle,draw=none] at (6.5,5) {};
\draw[outline happy] node (3) [shape=circle,draw] at (8.5,5) {$g_j$};
\draw[general] node [shape=circle,draw=none] at (10.5,5) {$\ldots$};
\draw[outline happy] node (4) [shape=circle,draw] at (12.5,5) {$g_k$};

\foreach \y in {1,2,3,4}{
    \draw[general] (\y) -- +(-45:.6cm);
    \draw[general] (\y) -- +(-75:.6cm);
    \draw[general] (\y) -- +(-60:.6cm);
    \draw[general] (\y) -- +(-90:.6cm);
    \draw[general] (\y) -- +(-105:.6cm);
    \draw[general] (\y) -- +(-120:.6cm);
    \draw[general] (\y) -- +(-135:.6cm);
}

\node[outline,shape=circle,draw] (global)  at (6.5,7) {$g$};

\foreach \x in {-4,-2,0,2}
    \draw[general] (6.5+\x,-3) -- (6.5+\x+2,-3) {};
\foreach \x in {-4,-2,2,4}
    \node[outline,shape=circle,draw,fill=white] (n\x) at (6.5+\x,-3) {};

\node[outline happy,shape=circle,draw,fill=white] (n0) at (6.5,-3) {};
    
\draw[general] (n-4) -- (4-1);
\draw[general] (n4) -- (8-2);

\foreach \y in {1,2,3,4}
    \draw[general] (\y) -- (global);

\draw[general] (global) -- (n0);

\end{tikzpicture}
\caption{A description of the reduction from \name{Multi-Colored Independent Set} to \name{Max Edge $\{1,2\}$-Coloring}.}

\end{figure}
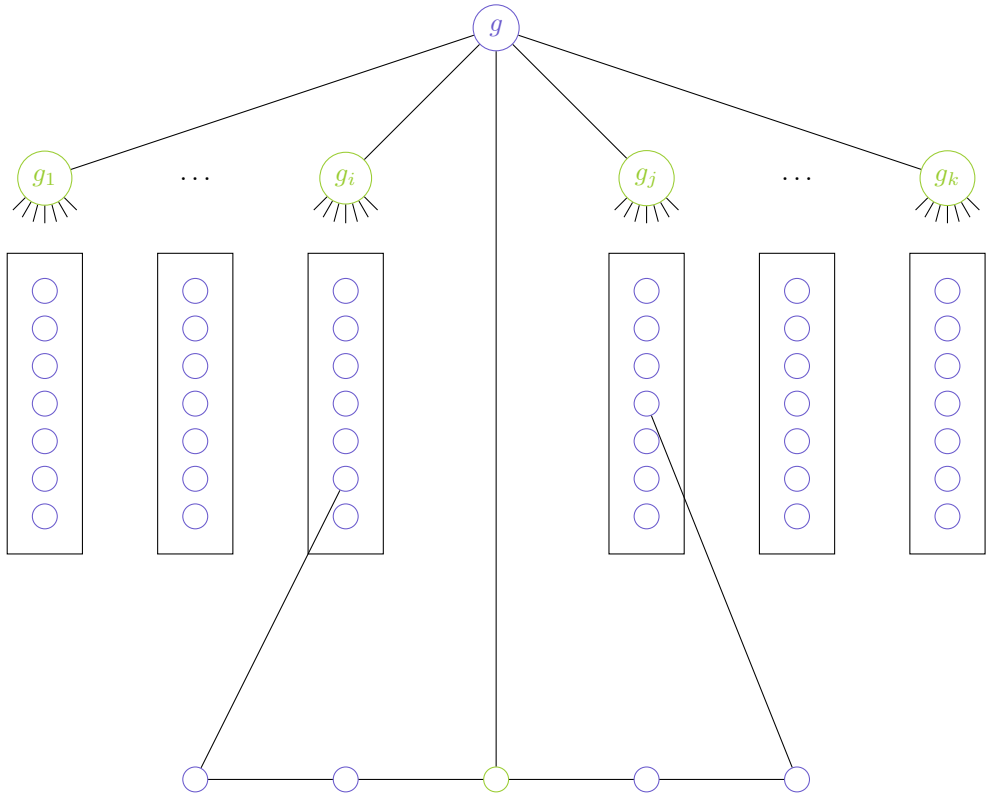

We define $f$ as follows. For any $e \in G$, $f(e_3) = 2$. Also, $f(g_i) = 2$ for all $1 \leq i \leq k$. For any other vertex $v \in H$, we have $f(v) = 1$. Let $l := k+1$. We are now ready to show the equivalence of $(G,V_1 \biguplus \cdots \biguplus V_k,k)$ and $(H,f,l)$.

\begin{lemma}
\label{lem:redn-c4-free}
The graph $G$ with vertex partitions $V_1, \ldots, V_k$ has a multi-colored independent set of size $k$ if and only if $H$ has a $f$-valid coloring that uses at least $k+1$ colors.
\end{lemma}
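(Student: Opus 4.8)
The plan is to prove both implications by tracking how the $f$-constraints force colours to propagate along the gadget paths. The central structural fact is that every vertex of $H$ other than $g_1,\dots,g_k$ and the vertices $e_3$ (one per edge of $G$) has $f$-value $1$, so \emph{all} edges incident to such a vertex must receive the same colour. In particular, for $v\in V(G)$ let $\mathrm{col}(v)$ denote the common colour of the edges incident to $v$ in $H$, and let $\alpha$ denote the common colour of the edges incident to the hub $g$ (the edges $(g,g_i)$ and $(g,e_3)$). Chasing the constraints at $e_u,e_u',e_v',e_v$ along the path $u,e_u,e_u',e_3,e_v',e_v,v$ shows that the edge $(e_u',e_3)$ has colour $\mathrm{col}(u)$ and $(e_v',e_3)$ has colour $\mathrm{col}(v)$, while $(g,e_3)$ has colour $\alpha$; since $f(e_3)=2$, this yields the key inequality $|\{\mathrm{col}(u),\mathrm{col}(v),\alpha\}|\le 2$ for every edge $(u,v)\in E(G)$. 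Likewise, since $f(g_i)=2$ and the edges at $g_i$ are $(g,g_i)$ (colour $\alpha$) together with the edges $(g_i,v)$ for $v\in V_i$ (colour $\mathrm{col}(v)$), there is at most one colour $\beta_i\neq\alpha$ occurring among $\{\mathrm{col}(v):v\in V_i\}$; fix such a $\beta_i$ when it exists.

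\textbf{Forward direction.} Given a multi-coloured independent set $S=\{s_1,\dots,s_k\}$ with $s_i\in V_i$, I would define $\mathrm{col}(w):=i$ if $w=s_i$ and $\mathrm{col}(w):=k+1$ otherwise, and colour $H$ as follows: the edge $(g_i,s_i)$ gets colour $i$; every other edge incident to some $g_i$, every edge incident to $g$, and every edge of every path gadget gets colour $k+1$, except that in the gadget for $e=(u,v)$ the three edges on the $u$-side of $e_3$ are recoloured $\mathrm{col}(u)$ and the three on the $v$-side are recoloured $\mathrm{col}(v)$. One checks $f$-validity directly: each $g_i$ sees $\{i,k+1\}$; $g$ sees $\{k+1\}$; each $v\in V(G)$ and each internal path vertex sees a single colour; and $e_3$ for $e=(u,v)$ sees $\{\mathrm{col}(u),\mathrm{col}(v),k+1\}$, which has size at most $2$ precisely because $S$ is independent, so not both $u,v$ lie in $S$ and hence at most one of $\mathrm{col}(u),\mathrm{col}(v)$ differs from $k+1$ (the case where $u,v$ lie in the same part $V_i$ is automatically fine). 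All of $1,\dots,k,k+1$ are used, so this is an $f$-valid colouring with $k+1$ colours.

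\textbf{Reverse direction.} Suppose $c$ is an $f$-valid colouring of $H$ using at least $k+1$ colours. By the propagation above, every edge of $H$ receives a colour in $\{\alpha\}\cup\{\mathrm{col}(v):v\in V(G)\}$, and $\mathrm{col}(v)\in\{\alpha,\beta_i\}$ whenever $v\in V_i$; hence at most $k+1$ colours are used in total. Since at least $k+1$ are used, exactly $k+1$ are used, which forces each $\beta_i$ to exist and $\alpha,\beta_1,\dots,\beta_k$ to be pairwise distinct. Then $\beta_i\neq\alpha$ is genuinely used, and by the inclusion above it can only appear as $\mathrm{col}(v)$ for some $v\in V_i$ (any other occurrence would force $\beta_i\in\{\alpha\}\cup\{\beta_j:j\ne i\}$); pick such a vertex and call it $s_i$. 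Setting $S=\{s_1,\dots,s_k\}$ we have $|S\cap V_i|=1$, and if some edge $(s_i,s_j)$ of $G$ (necessarily $i\ne j$) had both endpoints in $S$, the key inequality at its $e_3$-vertex would give $|\{\beta_i,\beta_j,\alpha\}|\le 2$, contradicting distinctness. Hence $S$ is a multi-coloured independent set of size $k$.

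\textbf{Main obstacle.} I expect the only real work to be the bookkeeping in the two structural claims — that the colour propagation along every gadget path is genuinely forced (so no colour can ``leak'' outside $\{\alpha\}\cup\{\beta_i\}$), and that the default colouring built in the forward direction is globally $f$-valid. These are routine but must be carried out uniformly over all edges of $G$, including the harmless degenerate case of an edge of $G$ whose endpoints lie in a single part $V_i$, which the solution $S$ can never violate.
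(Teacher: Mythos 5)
Your proof is correct and takes essentially the same route as the paper's: the forced colour propagation along the $f$-value-$1$ gadget paths, the palette analysis at $g$ and the $g_i$'s pinning down exactly $k+1$ colours $\alpha,\beta_1,\dots,\beta_k$, and the contradiction at the vertex $e_3$ to establish independence. Your forward colouring is the paper's colouring with $k+1$ playing the role of the default colour $0$, and your counting step is just a repackaging of the paper's claim that every colour must appear in some $c(F_{g_i})$.
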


\begin{proof}
In the forward direction, let $S = \{u_1,\ldots,u_k\}$ be a multi-colored independent set. Without loss of generality, let $u_i \in V_i$. Consider the following coloring function (the notation ``$\star$'' is used to denote any vertex): 

\begin{equation*}
c(t) = \left\{
\begin{array}{rl}
i & \text{if } t = (\star,u_i), \text{ or } t \in \{(u_i,e_{u_i}),(e_{u_i},e_{u_i}'),(e_{u_i}',e_3) ~|~ e \in E(G) \}\\
0 & \text{otherwise}.
\end{array} \right.
\end{equation*}

It is clear that $c$ uses $(k+1)$ colors. We now argue that $c$ is $f$-valid. Recall that we use $F_v$ to denote the set of edges incident on a vertex $v$, and $c(F)$ to denote the set of colors assigned to the edge set $F$. First, observe that $c(F_g) = \{0\}$. Further, for every $u\in G$, we have either that $c(F_u) = \{i\}$ (if $u \in V_i \cap S$), or $c(F_u) = \{0\}$. Now consider the vertex $g_i$ for any $1 \leq i \leq k$. Clearly, $N(g_i) \cap \{u_1, \ldots, u_k\} = \{u_i\}$. Therefore, $c(F_{g_i}) = \{0,i\}$. Recall that is $f$-valid since $f(g_i) = 2$ for all $1 \leq i \leq k$.

Finally, for $e \in G$, and $e=(u,v)$, consider the vertices $e_u, e_{u}', e_3, e_{v}',$ and $e_v$. If $u \notin S$ and $v \notin S$, then it is easy to verify that $$c(F_{e_u}) = c(F_{e_{u}'}) = c(F_{e_3}) = c(F_{e_{v}'}) = c(F_{e_v}) = \{0\}.$$ The other case is that one of $u$ or $v$ belong to $S$ (indeed, since $G[S]$ is an independent set, $S$ does not contain {\em both} $u$ and $v$). Without loss of generality, let $u \in S$ and suppose $u \in V_i$. Then note that $$c(F_{e_u}) = c(F_{e_{u}'}) = \{i\}, c(F_{e_3}) = \{0,i\}, \text{ and } c(F_{e_{v}'}) = c(F_{e_v}) = \{0\}.$$ 

Recall that $f(e_3) = 2$, and therefore, the above is a $f$-valid coloring. This completes the argument in the forward direction.

We now turn to the reverse direction. Let $c$ be a coloring of $H$ that uses $(k+1)$ colors, $\{0,1,\ldots,k\}$. Recall that $f(g) = 1$, and let $0$ denote the unique color in $c(F_g)$. We claim that:

\[ {\mathcal C} := \bigcup_{1 \leq i \leq k} c(F_{g_i}) = [k] \cup \{0\}. \]

The argument is by contradiction. Suppose the above does not hold, and let $j$ be the smallest color that is missing from ${\mathcal C}$. Let $e$ be the edge for which $c(e) = j$ (there must be such an edge, since $c$ uses all the $(k+1)$ colors). The edge cannot be incident to any of the $g_i$'s by assumption. Therefore, it must be one of the edges: $(e_u,e_u'),(e_u',e_3),(e_3,e_v')$ or $(e_v',e_v)$, for some $e = (u,v) \in E(G)$. Now, notice that if $c(e_u',e_3) = j$, then $c(e_u,e_u') = j$, since $f(e_u') = 1$. Similarly, since $f(e_u) = 1$, we have that $c(e_u,u) = j$. Finally, since $c(u) = 1$, $c(u,g_i) = j$ for some $i \in [k]$. This contradicts our assumption that there is no $g_i$ for which $j \in c(F_{g_i})$. An analogous argument can be made if $c(e_v',e_3) = j$, $c(e_u',e_u) = j$, or $c(e_v',e_v) = j$.

Therefore, we have that $c(F_{g_i}) = \{a_i,0\}$, where $a_i \in [k]$. Notice that $a_i \neq a_j$ for any $i \neq j$ (since $\{a_1, \ldots, a_k\} = [k]$). For $1 \leq i \leq k$, let $u_i \in V_i$ be a vertex for which $c(g_i,u_i) = a_i$. Notice that there is at least one such vertex, since all the neighbors of $g_i$ are in $V_i \cup \{g\}$, and one of the edges incident on $g_i$ is colored with $a_i$ (note that this edge cannot be $(g,g_i)$, which we already know has color $0$ according to $c$).  Consider $S$ defined as $\{u_1, \ldots, u_k\}$. By definition, we have that $|S \cap V_i| = 1$ for all $1\leq i \leq k$. We now argue that $G[S]$ is an independent set.

Suppose not, and let $e := (u_p,u_q) \in E$, for $1 \leq p \neq q \leq j$. By the definition of $S$, we have that $c(g_p,u_p) = a_p$ and $c(g_q,u_q) = a_q$, where $a_p \neq a_q \neq 0$. Since $f(u_p) = f(u_q) = 1$, observe that $c(u_p,e_{u_p}) = a_p$ and $c(u_q,e_{u_q}) = a_q$. Further, since$f(e_{u_p}) = f(e_{u_q}) = 1$, again, $c(e_{u_p},e_{u_p}') = a_p$ and $c(e_{u_q},e_{u_q}') = a_q$. Finally, since $c(e_{u_p}') = c(e_{u_q}') = 1$, we obtain:

$$c(e_{u_p}',e_3) = a_p \text{ and } c(e_{u_q}',e_3) = a_q,$$

Recall that $(e_3,g) \in E(H)$ and therefore, $0 \in c(F_{e_3})$. Putting this together with the above, we see that $c(F_{e_3}) = \{0,a_p,a_q\}$, contradicting our assumption that $c$ was $f$-valid. 
\end{proof}

It follows from the construction that $H$ as defined above has no cycles of length four. Observe that adding pendant vertices does not add any cycles to the graph. Therefore, we have the following result by Proposition~\ref{prop:12-to-2-hard}, Lemma~\ref{lem:redn-c4-free}.

\begin{theorem}
\label{thm:c4_free_hardness}
The \name{Max Edge $2$-Coloring} problem is NP-complete on graphs that have no cycles of length four.
\end{theorem}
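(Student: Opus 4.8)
The plan is to prove membership in NP and then NP-hardness by composing the two reductions assembled in this section. Membership is immediate: a $2$-valid edge coloring of $G$ that uses at least $k$ colors is a certificate of size polynomial in $|V(G)|$, and verifying that it is $2$-valid and uses at least $k$ colors can be done in polynomial time.

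For hardness, I would chain the reduction of Lemma~\ref{lem:redn-c4-free} with that of Proposition~\ref{prop:12-to-2-hard}. Lemma~\ref{lem:redn-c4-free} already gives a polynomial-time, answer-preserving reduction from \name{Multi-Colored Independent Set}---which is NP-complete---to \name{Max Edge $\{1,2\}$-Coloring}, producing the instance $(H,f,l)$ with $l = k+1$. The only remaining point is to verify that $H$ has no cycle of length four. I would do this through the characterization that a graph is $C_4$-free if and only if no two distinct vertices have two or more common neighbors, followed by a short case analysis over the vertex types of $H$: the global vertex $g$; the group vertices $g_1, \ldots, g_k$; the original vertices $v \in V_i$; and, for each $e = (u,v) \in E(G)$, the five gadget vertices $e_u, e_u', e_3, e_v', e_v$. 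All neighborhoods here are small and ``path-like'': $N(g) = \{g_1, \ldots, g_k\} \cup \{\, e_3 : e \in E(G)\,\}$; $N(g_i) = V_i \cup \{g\}$; $N(v) = \{g_i\} \cup \{\, e_v : e \in E(G),\ v \in e \,\}$ for $v \in V_i$; and within the gadget of $e = (u,v)$ the vertices form the path $u$--$e_u$--$e_u'$--$e_3$--$e_v'$--$e_v$--$v$ together with the extra edge $g$--$e_3$. Two facts keep the analysis short: the sets $V_1, \ldots, V_k$ are pairwise disjoint, and the gadget vertices are created afresh for each edge of $G$, so distinct edges never share gadget vertices. With these in hand one checks pair by pair that the only candidates for sharing two neighbors---for instance $g_i$ and $g_j$ (which could only share a vertex of $V_i \cap V_j$), $g$ and some $e_3$ (which could only share a $g_i$ or another $e_3$-type vertex), or $e_u'$ and $e_v'$ in the same gadget (whose unique common neighbor is $e_3$)---in fact all share at most one neighbor, so $H$ is $C_4$-free.

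Having established that $H$ is $C_4$-free, I would then invoke Proposition~\ref{prop:12-to-2-hard}: its reduction from \name{Max Edge $\{1,2\}$-Coloring} to \name{Max Edge $2$-Coloring} only attaches a pendant (degree-one) vertex to each vertex $v$ with $f(v) = 1$, and attaching pendant vertices to a graph cannot create any new cycle, so the output graph remains $C_4$-free. Both reductions run in polynomial time and preserve the answer, hence their composition is a polynomial-time reduction from \name{Multi-Colored Independent Set} to \name{Max Edge $2$-Coloring} whose output is always a $C_4$-free graph; combined with membership in NP, this proves the theorem.

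The main obstacle is the $C_4$-freeness verification for $H$. It is not conceptually deep, but it requires a disciplined enumeration of all pairs of vertex types, keeping careful track of the disjointness of the $V_i$'s and of the fact that gadget vertices are not shared between edges of $G$; overlooking either point could hide a four-cycle.
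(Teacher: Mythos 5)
Your proposal is correct and follows essentially the same route as the paper: compose the reduction of Lemma~\ref{lem:redn-c4-free} (from \name{Multi-Colored Independent Set} to \name{Max Edge $\{1,2\}$-Coloring}) with the pendant-vertex reduction of Proposition~\ref{prop:12-to-2-hard}, noting that $H$ is $C_4$-free and that adding pendants creates no new cycles. The only difference is that you spell out the $C_4$-freeness check via the two-common-neighbors characterization, which the paper simply asserts as following from the construction; your verification is accurate.
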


As an aside, we note that the reduction in this section also shows the W[1]-hardness of the Max Edge $\{1,2\}$-Coloring variant. The fact that \name{Max Edge $2$-Coloring} is FPT is not surprising, since the reduction from \name{Max Edge $\{1,2\}$-Coloring} to \name{Max Edge $2$-Coloring} is not parameter-preserving.

\begin{corollary}
The \name{Max Edge $\{1,2\}$-Coloring} problem is W[1]-hard on graphs that have no cycles of length four.
\end{corollary}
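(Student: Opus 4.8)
\emph{Proof proposal.} The plan is to observe that the construction already used to prove Lemma~\ref{lem:redn-c4-free} is, as it stands, a parameterized reduction producing $C_4$-free instances, so nothing new needs to be built; we only need to invoke the parameterized hardness of the source problem. Concretely, recall that \name{Multi-Colored Independent Set} is \WOH --- this is the standard fact that it is in fact W[1]-complete, via a parameter-preserving reduction from \name{Independent Set} (equivalently \name{Clique}), the canonical W[1]-complete problem. Hence it suffices to show that the map $(G,V_1\biguplus\cdots\biguplus V_k,k)\mapsto(H,f,l)$ defined just before Lemma~\ref{lem:redn-c4-free} is a valid parameterized reduction to \name{Max Edge $\{1,2\}$-Coloring} on $C_4$-free graphs.

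For this I would check three routine points. First, the reduction runs in polynomial time: $H$ has $|V(G)|+k+1+5|E(G)|$ vertices and is built in time polynomial in $|V(G)|+|E(G)|$, while $f$ and $l:=k+1$ are read off immediately. Second, correctness is exactly the content of Lemma~\ref{lem:redn-c4-free}: $(G,V_1\biguplus\cdots\biguplus V_k,k)$ is a \YES-instance if and only if $(H,f,k+1)$ is a \YES-instance of \name{Max Edge $\{1,2\}$-Coloring}. Third --- and this is the crucial feature distinguishing it from Proposition~\ref{prop:12-to-2-hard} --- the new parameter $l=k+1$ is bounded by a (here linear) function of the old parameter $k$, so the reduction is genuinely parameterized. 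Finally, the graph $H$ has no cycle of length four: this is the observation already recorded just before Theorem~\ref{thm:c4_free_hardness}, namely that every edge of $H$ either lies on one of the length-seven gadget paths attached to an edge of $G$ or on the star-like structure around $g$ and the $g_i$'s, and a short case analysis of closed walks shows none has length exactly four. Combining these with the \WOH-ness of \name{Multi-Colored Independent Set} gives the corollary.

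The main (and essentially only) obstacle is bookkeeping rather than mathematics: one must be careful to route the argument through Lemma~\ref{lem:redn-c4-free} directly, and \emph{not} compose it with Proposition~\ref{prop:12-to-2-hard}, since the latter reduction to \name{Max Edge $2$-Coloring} is explicitly not parameter preserving (the number of colors used there is unrelated to any natural parameter of the source). This is precisely why the corollary is stated for the $\{1,2\}$ variant only, and why the remark preceding it points out that FPT-ness of \name{Max Edge $2$-Coloring} is not contradicted.
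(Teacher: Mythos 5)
Your proposal is correct and takes essentially the same route as the paper: the corollary is precisely the paper's (implicit) observation that the construction underlying Lemma~\ref{lem:redn-c4-free} is a parameter-preserving reduction (new parameter $l=k+1$) from the W[1]-hard \name{Multi-Colored Independent Set} problem that produces $C_4$-free instances, and that one must not compose with the non-parameter-preserving Proposition~\ref{prop:12-to-2-hard}. The only cosmetic slip is that each edge gadget is a path on seven vertices (six edges) plus the edge $(g,e_3)$, not a ``length-seven'' path, which does not affect the $C_4$-freeness check.
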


\subsection{A Polynomial Kernel for $C_4$-free graphs}

We now turn to an argument for a polynomial kernel on the class of $C_4$-free graphs. As with the exponential kernel on general graphs, the first step is to identify a matching of maximum size, say $M$, and return a trivial \YES-instance if $|M| \geq k-1$. If this is not the case, let $S \subseteq V$ be the set of both endpoints of every edge in $M$. We use $I$ to denote $V \setminus S$. 

Consider any vertex $v \in S$ and let $X_v$ denote the neighbors of $v$ in $I$, that is, $X_v := N(v) \cap I$. Observe that for any two vertices $u$ and $v$ in $S$, $X_u$ and $X_v$ can share at most one vertex. Indeed, if $p,q \in X_u \cap X_v$, then $\{u,p,v,q\}$ form a four-cycle, contradicting the assumption that $G$ is  $C_4$-free. We say that a vertex $u \in X_v$ is a {\em shared neighbor} of $v$ if there exists a $w \in S$, $w \neq v$, such that $u \in X_w$. Notice that $v$ has at most $(2k-1)$ shared neighbors. 

Let $Y_v \subseteq X_v$ denote the set of shared neighbors of $v$, and let $Z_v := X_v \setminus Y_v$ denote the rest of $X_v$. Notice that every vertex $u \in Z_v$ is a degree one vertex in the graph $G$ (since it has no other neighbors in either $S \setminus \{v\}$ or $I \setminus \{u\}$ by definition). It is easy to see that if $|Z_v| > 2$, then we may safely delete all but two vertices from $Z_v$ and obtain an equivalent instance. 

Therefore, in a reduced instance, $|X_v| \leq 2k+1$ for all $v \in S$, and since $|I| \leq \sum_{v \in S} |X_v|$, we have that $|I| \leq (2k+1) \cdot 2k$, and $|V| \leq |I| + |S| \leq 2k(2k+2) = O(k^2)$. We thus have the following.

\begin{lemma}
The \name{Max Edge $2$-Coloring} problem has a quadratic vertex kernel when restricted to $C_4$-free graphs. 
\end{lemma}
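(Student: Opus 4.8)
The plan is to follow the structure already set up just before the statement, making each bound explicit. First I would invoke the standard matching-based preprocessing: compute a maximum matching $M$ in polynomial time, and if $|M| \geq k-1$ then return a trivial \YES-instance (this is justified by the analogue of Proposition~\ref{Claim:maxmatching}, noting $\sigma(G)$ is at least the matching size plus one when there is a spare edge). Otherwise $|M| \leq k-2$, so $S$, the set of endpoints of $M$-edges, is a vertex cover with $|S| \leq 2k-4 \leq 2k$, and $I := V \setminus S$ is an independent set. The whole argument then reduces to bounding $|I|$.

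The key geometric fact is the $C_4$-freeness: for distinct $u,v \in S$, the neighborhoods $X_u := N(u) \cap I$ and $X_v := N(v) \cap I$ intersect in at most one vertex, since two common neighbors $p,q$ together with $u,v$ would form a four-cycle. Call $u \in X_v$ a \emph{shared neighbor} of $v$ if it lies in $X_w$ for some $w \in S \setminus \{v\}$; since there are at most $2k-1$ other vertices in $S$ and each contributes at most one shared neighbor, $v$ has at most $2k-1$ shared neighbors. The remaining vertices $Z_v := X_v \setminus Y_v$ are private to $v$, and moreover each such vertex has degree exactly one in $G$ (no neighbor in $S \setminus \{v\}$ by privateness, no neighbor in $I$ since $I$ is independent). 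Next I would state and justify the reduction rule: if $|Z_v| > 2$, delete all but two vertices of $Z_v$. Correctness is easy — degree-one vertices can always be recolored to mimic a surviving degree-one twin, so all colors are preserved in both directions (this mirrors the twin-contraction argument in Proposition~\ref{Claim:expkernel_correct}, but is simpler here because these are pendant vertices). This rule runs in linear time.

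After exhaustive application of the rule, $|X_v| = |Y_v| + |Z_v| \leq (2k-1) + 2 = 2k+1$ for every $v \in S$. Since every vertex of $I$ lies in $X_v$ for at least one $v \in S$, we get $|I| \leq \sum_{v \in S} |X_v| \leq |S| \cdot (2k+1) \leq (2k) \cdot (2k+1)$, hence $|V| = |S| + |I| \leq 2k + 2k(2k+1) = 2k(2k+2) = O(k^2)$, and the edge count is likewise $O(k^2)$ since the vertex cover has size $O(k)$ and each cover vertex has $O(k)$ incident edges. Together with the polynomial-time preprocessing this gives the claimed quadratic vertex kernel.

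I do not anticipate a serious obstacle; the argument is a clean vertex-cover-plus-pendant-reduction kernelization and every ingredient is elementary. The one point to be careful about is the correctness proof of the deletion rule — one must check that deleting a degree-one vertex never decreases $\sigma(G)$ either, which follows because in any $2$-valid coloring of the reduced graph, the edge at a surviving private vertex $w \in Z_v$ can be duplicated onto the reinstated vertices without affecting $2$-validity at $v$ (all these edges meet $v$, and $v$ already sees that color) or anywhere else; and conversely restricting a coloring of $G$ to the reduced graph loses no colors because each deleted pendant's color also appears on $w$'s edge or can be shifted there. A second minor subtlety worth a sentence is ensuring the bound $|S| \leq 2k-4$ is actually available — it requires $|M| \leq k-2$, i.e. that we rejected the case $|M| \geq k-1$; using the looser $|S| \leq 2k$ suffices for the $O(k^2)$ conclusion in any case.
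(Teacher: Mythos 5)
Your proposal is correct and follows essentially the same route as the paper: matching-based preprocessing, the vertex cover $S$ of size at most $2k$, the $C_4$-freeness argument bounding shared neighbors by $2k-1$, the rule deleting all but two private (pendant) neighbors per cover vertex, and the resulting $|I| \leq 2k(2k+1)$ count. Your explicit correctness argument for the pendant-deletion rule (both pendant colors come from $v$'s palette of size at most two, so two survivors suffice) is a welcome elaboration of a step the paper only asserts as easy.
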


\makeatletter{}\section{Concluding Remarks and Future Work} The most natural unresolved question is to settle the kernelization complexity of the maximum edge $2$-coloring problem when parameterized by the solution size. The exponential kernel described in this work implies a polynomial kernel when the input is restricted to graphs where the maximum degree is a constant, and also if the input is restricted to graphs without cycles of length four. These observations are interesting because the problem continues to be \NPC~for both of these graph classes. The NP-hardness for graphs of bounded degree can be obtained by easy modifications to the reduction proposed in~\cite{AP10}, and we have described the NP-hardness on graphs without cycles of length four. Given these results, the question of whether the problem admits a polynomial kernel on general graphs is an interesting open problem. 

Improved FPT algorithms for both the standard and the dual parameter, specifically with running time $O(c^k)$ for some constant $c$, will be of interest as well. It is also natural to pursue the above-guarantee version of the question, with the size of the maximum matching used as the guarantee. In particular, if $\gamma$ is the size of a maximum matching in a graph $G$, we would like to study the question of checking if $G$ can be colored with at least $(\gamma + k)$ colors, parameterized by $k$. 

For the more general question of \name{Maximum Edge $q$-Coloring}, note that since the problem is NP-complete for fixed values of $q$, the question is para-NP-complete when parameterized by $q$ alone. Generalizing some of the results that hold for $q=2$ is also an interesting direction for future work.

\bibliographystyle{plain}
\bibliography{maxedgecoloring}

\begin{thebibliography}{10}

\bibitem{AP10}
Anna Adamaszek and Alexandru Popa.
\newblock Approximation and hardness results for the maximum edge q -coloring
  problem.
\newblock In {\em Algorithms and Computation - 21st International Symposium,
  {ISAAC}}, pages 132--143, 2010.

\bibitem{AlonGKSY10}
Noga Alon, Gregory Gutin, Eun~Jung Kim, Stefan Szeider, and Anders Yeo.
\newblock Solving {MAX}-r-{SAT} above a tight lower bound.
\newblock In {\em SODA}, pages 511--517, 2010.

\bibitem{Bodlaender09}
Hans~L. Bodlaender.
\newblock Kernelization: New upper and lower bound techniques.
\newblock In {\em IWPEC}, pages 17--37, 2009.

\bibitem{BDF}
Hans~L. Bodlaender, Rodney~G. Downey, Michael~R. Fellows, and Danny Hermelin.
\newblock On problems without polynomial kernels.
\newblock {\em J. Comput. Syst. Sci.}, 75(8):423--434, 2009.

\bibitem{BodlaenderFLPST09}
Hans~L. Bodlaender, Fedor~V. Fomin, Daniel Lokshtanov, Eelko Penninkx, Saket
  Saurabh, and Dimitrios~M. Thilikos.
\newblock (meta) kernelization.
\newblock In {\em FOCS}, pages 629--638, 2009.

\bibitem{Courcelle90}
B.~Courcelle.
\newblock The monadic second-order theory of graphs. {I}. recognizable sets of
  finite graphs.
\newblock {\em Information and Computation}, 85(1):12--75, 1990.

\bibitem{DM10}
Holger Dell and Dieter van Melkebeek.
\newblock Satisfiability allows no nontrivial sparsification unless the
  polynomial-time hierarchy collapses.
\newblock In {\em STOC}, pages 251--260, 2010.

\bibitem{graphbook}
{R}einhard {D}iestel.
\newblock {\em {G}raph {T}heory}.
\newblock Springer-Verlag, Heidelberg, third edition, 2005.

\bibitem{ParameterizedComplexityBook}
Rodney~G. Downey and Michael~R. Fellows.
\newblock {\em Parameterized Complexity}.
\newblock Springer-Verlag, 1999.
\newblock 530 pp.

\bibitem{FengZW09}
Wangsen Feng, Li'ang Zhang, and Hanpin Wang.
\newblock Approximation algorithm for maximum edge coloring.
\newblock {\em Theor. Comput. Sci}, 410(11):1022--1029, 2009.

\bibitem{FlumGroheBook}
J.~Flum and M.~Grohe.
\newblock {\em Parameterized Complexity Theory (Texts in Theoretical Computer
  Science. An EATCS Series)}.
\newblock Springer-Verlag New York, Inc., Secaucus, NJ, USA, 2006.

\bibitem{FominLST10}
Fedor~V. Fomin, Daniel Lokshtanov, Saket Saurabh, and Dimitrios~M. Thilikos.
\newblock Bidimensionality and kernels.
\newblock In {\em SODA}, pages 503--510, 2010.

\bibitem{FS}
Lance Fortnow and Rahul Santhanam.
\newblock Infeasibility of instance compression and succinct {PCP}s for {NP}.
\newblock In {\em STOC}, pages 133--142, 2008.

\bibitem{GN07}
Jiong Guo and Rolf Niedermeier.
\newblock Invitation to data reduction and problem kernelization.
\newblock {\em SIGACT News}, 38(1):31--45, 2007.

\bibitem{KratschW12}
Stefan Kratsch and Magnus Wahlstr{\"o}m.
\newblock Compression via matroids: a randomized polynomial kernel for odd
  cycle transversal.
\newblock In {\em SODA}, pages 94--103, 2012.

\bibitem{RN}
Rolf Niedermeier.
\newblock {\em Invitation to Fixed Parameter Algorithms (Oxford Lecture Series
  in Mathematics and Its Applications)}.
\newblock {Oxford University Press, USA}, March 2006.

\bibitem{RaniwalaGC04}
Ashish Raniwala, Kartik Gopalan, and Tzi cker Chiueh.
\newblock Centralized channel assignment and routing algorithms for
  multi-channel wireless mesh networks.
\newblock {\em Mobile Computing and Communications Review}, 8(2):50--65, 2004.

\end{thebibliography}

\end{document}